\begin{document}

\newtheorem{definition}{Definition}
\newtheorem{lemma}{Lemma}
\newtheorem{corollary}{Corollary}
\newtheorem{theorem}{Theorem}
\newtheorem{example}{Example}
\newtheorem{proposition}{Proposition}
\newtheorem{remark}{Remark}
\newtheorem{assumption}{Assumption}
\newtheorem{corrolary}{Corrolary}
\newtheorem{property}{Property}
\newtheorem{ex}{EX}
\newtheorem{problem}{Problem}
\newcommand{\argmin}{\arg\!\min}
\newcommand{\argmax}{\arg\!\max}
\newcommand{\st}{\text{s.t.}}
\newcommand \dd[1]  { \,\textrm d{#1}  }

\title{\Large\bf Control Barrier Functions for Abstraction-Free Control Synthesis under Temporal Logic Constraints}

\author{Luyao Niu and Andrew Clark %
\thanks{L. Niu and A. Clark are with the Department of Electrical and Computer Engineering, Worcester Polytechnic Institute, Worcester, MA 01609 USA.
{\tt\small \{lniu,aclark\}@wpi.edu}}
	\thanks{This work was supported by the National Science Foundation and the Office of Naval Research via grants CNS-1941670 and N00014-17-S-B001.}
}
\thispagestyle{empty}
\pagestyle{empty}

\maketitle

\begin{abstract}
Temporal logic has been widely used to express complex task specifications for cyber-physical systems (CPSs). One way to synthesize a controller for CPS under temporal logic constraints is to first abstract the CPS as a discrete transition system, and then apply formal methods. This approach, however, is computationally demanding and its scalability suffers due to the curse of dimensionality. In this paper, we propose a control barrier function (CBF) approach to abstraction-free control synthesis under a linear temporal logic (LTL) constraint. We first construct the deterministic Rabin automaton of the specification and compute an accepting run. We then compute a sequence of LTL formulae, each of which must be satisfied during a particular time interval, and prove that satisfying the sequence of formulae is sufficient to satisfy the LTL specification. Finally, we compute a control policy for satisfying each formula by constructing an appropriate CBF. We present a quadratic program to compute the controllers, and show the controllers synthesized using the proposed approach guarantees the system to satisfy the LTL specification, provided the quadratic program is feasible at each time step. A numerical case study is presented to demonstrate the proposed approach.
\end{abstract}

\section{Introduction}\label{sec:intro}
Cyber-physical systems (CPSs) are assigned increasingly complex objectives including reactive and sequential tasks in domains such as autonomous vehicles, advanced manufacturing, and health care systems. Temporal logics \cite{baier2008principles} such as linear temporal logic (LTL) are widely adopted to specify properties and verify behaviors of CPSs due to their rich and rigorous expressiveness. Consequently, temporal logic based control has gained research attention in areas including robotics \cite{kress2009temporal} and traffic network control \cite{coogan2015traffic}.

While control synthesis for CPSs focuses on generating closed-loop controllers in the continuous domain, the semantics of LTL are defined in a discrete manner, which are normally captured by finite state automata. Motivated by formal methods in the context of model checking \cite{baier2008principles}, control synthesis for CPSs are normally lifted from continuous domain to discrete domain by computing a finite abstraction of the CPS. The finite abstraction can be generated by partitioning the state space and control input space of CPSs with consistency guarantees (e.g., simulation and bisimulation relations \cite{alur2000discrete}). Typical forms of finite abstractions of CPSs include finite transition system \cite{fainekos2009temporal}, Markov decision process\cite{ding2014optimal,wongpiromsarn2009receding}, and stochastic game \cite{niu2019optimal}.

Abstraction-based approaches are computationally demanding and suffer from the curse of dimensionality. To avoid the computation required for constructing the finite abstraction, researchers have investigated control synthesis on the continuous state space. Techniques include formulating the LTL constraint as a mixed integer linear program \cite{wolff2014optimization}, a sequence of stochastic reachability problems \cite{horowitz2014compositional}, and a mixed continuous-discrete HJB equation \cite{papusha2016automata}. Solving for the controller thus requires numerical methods which can be computationally expensive.

In this paper, we consider control synthesis for CPSs under LTL constraints without computing a finite abstraction. We aim to compute a feedback controller such that a control affine system satisfies a given LTL specification from the fragment of LTL without next operator. We present a control barrier function (CBF) based framework to compute the controller. There are two main advantages of our approach compared to the state of the art. First, we avoid both finite-state abstraction of the CPS and approximate solution of the HJB equation, and thus reduce the computational complexity. Second, we introduce time-varying \emph{guard functions} that render our approach feasible for a broad class of LTL properties. 
To summarize, this paper makes the following contributions.
\begin{itemize}
    \item We present a CBF-based approach to synthesize a controller for CPSs under LTL constraints. The proposed approach is provably correct and avoids explicit construction of a finite abstraction of CPSs.
    \item We construct a sequence of formulae that correspond to an accepting trace of the CPS, and develop a methodology to construct CBFs for each formula. When designing the CBF, we introduce the concept of guard function, which implicitly encodes the time that each formula needs to be satisfied and enhances the feasibility of the CBF constraints.
    \item We compute the set of controllers that satisfy each constructed formula using two types of time varying CBFs. We show that by satisfying the CBF constraints for all formulae, the LTL specification is satisfied.
    \item A numerical case study on robotic motion planning is presented as evaluation. The proposed approach successfully synthesizes a controller for a specification that is infeasible under the state of art.
\end{itemize}

The remainder of this paper is organized as follows. Section \ref{sec:related} reviews the related work. Section \ref{sec:preliminary} gives preliminary background on CBFs and LTL. The problem of interest is formulated in Section \ref{sec:formulation}. Our solution approach is presented in Section \ref{sec:solution}. Section \ref{sec:simulation} gives a numerical case study on robotic motion planning. Section \ref{sec:conclusion} concludes the paper.
\section{Related Work}\label{sec:related}

Verification and control synthesis methodologies for CPS under temporal logic constraints have been proposed based on constructing an abstraction of the dynamical system \cite{kress2009temporal,fainekos2009temporal,ding2014optimal,wongpiromsarn2009receding,niu2019optimal}. However, it is normally computationally expensive to construct an abstraction of the dynamical system. 


To avoid constructing the finite abstraction, controller synthesis in continuous domain under LTL constraints has been studied. An optimization based approach was proposed in \cite{wolff2014optimization}. The authors encoded the LTL constraints as mixed-integer linear constraints. The authors of  \cite{horowitz2014compositional} studied control synthesis under co-safe LTL specifications. They computed the controller by solving a sequence of stochastic constrained reachability problems, which relied on solving nonlinear PDEs. The authors of \cite{papusha2016automata} formulated a hybrid system using the automaton and the continuous system, and computed the controller by solving a mixed continuous-discrete HJB equation. Normally PDEs and HJB equations are approximately solved using computationally expensive numerical approaches such as Finite-Difference \cite{kappen2005linear}. In this work, we give a different CBF based approach to synthesize a controller by solving a quadratic program at each time.

The concept of CBF extends barrier function to CPSs with control inputs \cite{wieland2007constructive}. Recent works have developed CBF based approaches for safety critical system \cite{wang2017safety,ames2016control}. LTL specifications, however, capture a richer set of properties that cannot be modeled by safety constraints alone.

CBFs have gained popularity for CPSs under LTL constraints \cite{srinivasan2019control,srinivasan2018control} and signal temporal logic (STL) constraints \cite{lindemann2018control,yang2019continuous}. The authors of \cite{srinivasan2019control,srinivasan2018control} studied the problem of control synthesis of multi-robot system under LTL constraints using CBFs. Under their approaches, they solved a sequence of reachability problems using CBFs, and proposed a relaxation of CBF based constraints.
This work differs from \cite{srinivasan2019control,srinivasan2018control} in the following two aspects. First, our approach considers a broader fragment of LTL compared to \cite{srinivasan2019control}. A fragment of LTL specification named $\text{LTL}_{Robotic}$ is considered in \cite{srinivasan2019control}, while we consider LTL without next operator. Second, we propose a different approach to resolve the infeasibility between CBF constraints compared to \cite{srinivasan2019control,srinivasan2018control}. As we will demonstrate later, our proposed approach synthesizes a controller for a specification that is infeasible using the approaches in \cite{srinivasan2019control,srinivasan2018control}, and thus serves as a complement to \cite{srinivasan2019control,srinivasan2018control}.

\section{Preliminaries}\label{sec:preliminary}

A function $f:\mathbb{R}^n\times [0,\infty)\mapsto\mathbb{R}$ is Lipschitz continuous on $\mathcal{X}\subset\mathbb{R}^n$ if there exists some constant $K>0$ such that $\|f(x_1)-f(x_2)\|\leq K\|x_1-x_2\|$ for all $x_1,x_2\in\mathcal{X}$, where $\|\cdot\|$ is the Euclidean norm. A function $f$ is locally Lipschitz continuous if there exist constants $\tau>0$ and $K>0$ such that $\|f(x_1)-f(x_2)\|\leq K\|x_1-x_2\|$ for all $\|x_1-x_2\|\leq\tau$.  

A continuous function $\alpha:[0,a)\mapsto[0,\infty)$ belongs to class $\mathcal{K}$ if it is strictly increasing and $\alpha(0)=0$. A continuous function $\alpha:[-b,a)\mapsto(-\infty,\infty)$ is said to belong to extended class $\mathcal{K}$ if it is strictly increasing and $\alpha(0)=0$ for some $a,b>0$.
The following lemma gives smooth approximations of $\min$ and $\max$ operators.
\begin{lemma}[Approximation of $\min$ and $\max$ Operators \cite{boyd2004convex}]\label{lemma:min operator}
Consider a set of functions $h_i(x,t)$. Then for $\lambda>0$
\begin{align*}
    \min_i h_i(x,t)\geq -\ln{\left(\sum_{i=1}^k\exp{(-h_{i}(x,t))}\right)},\\
    \max_i h_i(x,t)\geq \frac{\sum_ih_i(x,t)\exp(\lambda h_i(x,t))}{\sum_i\exp(\lambda h_i(x,t))}.
\end{align*}
\end{lemma}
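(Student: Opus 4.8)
The plan is to prove the two inequalities independently; each reduces to an elementary observation, so I do not anticipate a genuine obstacle — the "hard part" is essentially just writing the two short estimates cleanly and noting that they hold pointwise in $(x,t)$. The first bound will follow from positivity of the exponential together with monotonicity of $-\ln$; the second will follow from the fact that a convex combination of real numbers never exceeds their maximum.

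\textbf{First inequality.} Fix $(x,t)$ and let $j \in \argmin_i h_i(x,t)$, so that $\min_i h_i(x,t) = h_j(x,t)$. Since every term $\exp(-h_i(x,t))$ is strictly positive, discarding all summands except the $j$-th gives $\sum_{i=1}^k \exp(-h_i(x,t)) \geq \exp(-h_j(x,t))$. The map $s \mapsto -\ln s$ is strictly decreasing on $(0,\infty)$, so applying it to both sides reverses the inequality and yields $-\ln\!\left(\sum_{i=1}^k \exp(-h_i(x,t))\right) \leq -\ln\!\left(\exp(-h_j(x,t))\right) = h_j(x,t) = \min_i h_i(x,t)$, which is the claim. (As an aside, the same argument with the $k$-fold bound $\sum_i \exp(-h_i) \leq k\exp(-\min_i h_i)$ shows the gap is at most $\ln k$, so in the usual scaled version the approximation becomes exact as the temperature parameter tends to infinity.)

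\textbf{Second inequality.} Again fix $(x,t)$ and define $w_i := \exp(\lambda h_i(x,t)) \big/ \sum_{\ell=1}^k \exp(\lambda h_\ell(x,t))$. Each $w_i$ is strictly positive and $\sum_{i=1}^k w_i = 1$, so the right-hand side $\sum_{i=1}^k w_i h_i(x,t)$ is a convex combination of the numbers $\{h_i(x,t)\}_{i=1}^k$; bounding each $h_i(x,t)$ by $\max_i h_i(x,t)$ gives $\sum_i w_i h_i(x,t) \leq \big(\max_i h_i(x,t)\big)\sum_i w_i = \max_i h_i(x,t)$. Note that the hypothesis $\lambda>0$ is not needed for the inequality itself — the weights $w_i$ are positive for any real $\lambda$ — but it is what makes the right-hand side converge to $\max_i h_i(x,t)$ as $\lambda \to \infty$, which is the point of the smooth approximation. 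Since neither argument uses any structure of $x$ or $t$, both inequalities hold at every $(x,t)$, and continuity of both sides is inherited from continuity of the $h_i$.
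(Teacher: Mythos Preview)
Your proof is correct and entirely standard. Note, however, that the paper does not actually prove this lemma: it is stated as a preliminary result with a citation to Boyd and Vandenberghe's \emph{Convex Optimization}, so there is no proof in the paper to compare against. Your two arguments --- bounding the log-sum-exp via monotonicity of $-\ln$ and positivity of the exponential, and bounding the softmax-weighted mean as a convex combination --- are exactly the elementary derivations one would expect, and your parenthetical remarks about the $\ln k$ gap and the role of $\lambda>0$ are accurate and useful context.
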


\subsection{Linear Temporal Logic without Next Operator}

In this subsection, we present background on LTL without next operator, denoted as $\text{LTL}_{\setminus\bigcirc}$. An $\text{LTL}_{\setminus\bigcirc}$ formula is defined inductively as  $\varphi=\top\mid\pi\mid\neg\varphi\mid\varphi_1\land\varphi_2\mid\varphi_1\mathbf{U}\varphi_2$ \cite{baier2008principles}.
Other Boolean and temporal operators are defined inductively. LTL specification with implication ($\phi\implies\psi$) is defined as $\neg\phi\land\psi$. LTL specification involving eventually operator $\Diamond\pi$ is equivalent to $\top\mathbf{U}\pi$. LTL specification involving always operator $\Box\pi$ can be rewritten as $\neg\Diamond\neg\pi$. When the context is clear, we abuse the notions of LTL specification and $\text{LTL}_{\setminus\bigcirc}$ in the remainder of this paper.

Given an atomic proposition set $\Pi$, we augment $\Pi$ so that every $\pi\in\Pi$ implies $\neg\pi\in\Pi$. We interpret LTL formulae over infinite words in $2^\Pi$. Formula $\phi\land\psi$ is true iff $\phi$ and $\psi$ are both true. Formula $\phi\mathbf{U}\psi$ is true iff $\phi$ remains true until $\psi$ becomes true. A word $\eta$ that satisfies an LTL formula $\phi$ is denoted as $\eta\models\phi$.

A deterministic Rabin automaton (DRA) can be used to represent an LTL formula $\varphi$. A DRA is defined as follows.
\begin{definition}[Deterministic Rabin Automaton (DRA) \cite{baier2008principles}]\label{def:DRA}
A Deterministic Rabin Automaton (DRA) is a tuple $\mathcal{A}=(Q,\Sigma,\delta,q_0,F)$, where $Q$ is a finite set of states, $\Sigma$ is the finite set of alphabet, $\delta:Q\times\Sigma\mapsto Q$ is a finite set of transitions, $q_0\in Q$ is the initial state, and $F=\{(B(1),\Gamma(1)),\ldots,(B(S),\Gamma(S))\}$ is a finite set of Rabin pairs such that $B(s),\Gamma(s)\subseteq Q$ for all $s=1,\ldots,S$ with $S$ being a positive integer.
\end{definition}
Every LTL formula can be represented by a DRA. Translation from LTL to DRA has been extensively studied \cite[Ch.5]{baier2008principles}, and is beyond the scope of this paper. Given a state $q\in Q$, we define the set of neighbor states of $q$ as $\mathcal{N}(q)=\underset{\sigma\in\Sigma}{\cup}\delta(q,\sigma)\setminus \{q\}$, i.e., a state $q'\neq q$ is a neighbor state of $q$ if there exists some transition $\delta$ such that $q$ can transit to $q'$. A run $\eta$ on $\mathcal{A}$ is an accepting run if there exists a pair $(B(s),\Gamma(s))$ such that $\eta$ intersects with $B(s)$ finitely many times and intersects with $\Gamma(s)$ infinitely many times. Given the DRA $\mathcal{A}$, we can find the set of accepting runs on $\mathcal{A}$ by a graph search algorithm \cite{baier2008principles}. Any accepting run of the automaton can be divided into two parts including a prefix $\eta_{pref}$ and a suffix $\eta_{suff}$ that is repeated infinitely often \cite[Ch.4]{baier2008principles}. Both the prefix and suffix can be regarded as finite runs, and $\eta$ can be represented as $\eta=\eta_{pref}(\eta_{suff})^\omega$.

\section{Problem Formulation}\label{sec:formulation}

Consider a continuous-time control-affine system
\begin{equation}\label{eq:dynamic}
    \dot{x}=f(x)+g(x)u,\quad x(0)=x_0,
\end{equation}
where $x\in\mathcal{X}\subseteq\mathbb{R}^n$ is the system state, and $u\in\mathcal{U}\subseteq\mathbb{R}^m$ is input provided by the controller. The initial state at time $t=0$ is denoted as $x(0)=x_0$. Vector fields $f$ and $g$ are locally Lipschitz continuous. Given the current system state $x$, a feedback controller is a function $\mu:\mathcal{X}\times[0,\infty)\mapsto\mathcal{U}$. 


Let $\Pi$ be a finite set of atomic propositions. We define a labeling function $L:\mathcal{X}\mapsto 2^\Pi$ that maps any state $x\in\mathcal{X}$ to a subset of atomic propositions that hold true at $x$. We also define $\llbracket\pi\rrbracket=\{x|\pi\in L(x)\}$ to be the set of states that satisfies the atomic proposition $\pi\in\Pi$. In this work, we assume that $\llbracket\pi\rrbracket$ is a closed set for all $\pi\in\Pi$, and $\llbracket\pi\rrbracket$ can be represented as $\llbracket\pi\rrbracket=\{x|Z_\pi(x)\geq 0\}$, where $Z_\pi:\mathbb{R}^n\mapsto\mathbb{R}$ is a bounded and continuously differentiable function. 
We slightly overload the notation $\llbracket\cdot\rrbracket$, and define the states that satisfy a subset of atomic propositions $P\in 2^\Pi$ as 
\begin{equation}\label{eq:subset atomic}
    \llbracket P\rrbracket=\begin{cases}
    \mathcal{X}\setminus\cup_{\pi\in\Pi}\llbracket\pi\rrbracket&\mbox{ if }P=\emptyset\\
    \cap_{\pi\in P}\llbracket\pi\rrbracket\setminus\cup_{\pi\in\Pi\setminus P}\llbracket\pi\rrbracket&\mbox{ otherwise}
    \end{cases}
\end{equation}
That is, $\llbracket P\rrbracket$ is the subset of system states $\mathcal{X}$ that satisfy all and only propositions in $P$ \cite{wongpiromsarn2015automata}.

We define the trajectory of system \eqref{eq:dynamic} as $\mathbf{x}:[0,\infty)\mapsto\mathcal{X}$ that maps from any time $t\geq0$ to the system state $x(t)$. 
We then define the trace of a trajectory $\mathbf{x}$ as follows.
\begin{definition}[Trace of Trajectory \cite{wongpiromsarn2015automata}]\label{def:trace}
An infinite sequence $Trace(\mathbf{x})=P_0,P_1,\ldots$, where $P_i\in 2^\Pi$ for all $i=0,1,\ldots$ is a trace of a trajectory $\mathbf{x}$ if there exists an associated time sequence $t_0,t_1,\ldots$ of time instants such that
\begin{enumerate}
    \item $t_0=0$
    \item $t_\tau\rightarrow\infty$ as $\tau\rightarrow\infty$
    \item $t_i<t_{i+1}$
    \item $x(t_i)\in \llbracket P_i\rrbracket$
    \item if $P_i\neq P_{i+1}$, then there exists some $t_i'\in[t_i,t_{i+1}]$ such that $x(t)\in\llbracket P_i\rrbracket$ for all $t\in(t_i,t_i')$, $x(t)\in\llbracket P_{i+1}\rrbracket$ for all $t\in(t_i',t_{i+1})$, and either $x(t_i')\in\llbracket P_i\rrbracket$ or $x(t_i')\in\llbracket P_{i+1}\rrbracket$.
\end{enumerate}
\end{definition}
The trace of the system trajectory gives the sequence of atomic propositions satisfied by the system, and thus bridges the system behavior with temporal logic specification. Given a controller $\mu$, we denote the trajectory under controller $\mu$ as $\mathbf{x}^\mu$. The trace of trajectory $\mathbf{x}^\mu$ is denoted by $Trace(\mathbf{x}^\mu)$. Suppose a specification $\varphi$ belonging to $\text{LTL}_{\setminus\bigcirc}$ is given to system \eqref{eq:dynamic}. If $Trace(\mathbf{x}^\mu)\models\varphi$, we say system \eqref{eq:dynamic} satisfies $\varphi$ under controller $\mu$, or controller $\mu$ satisfies $\varphi$. We state the problem of interest as follows.
\begin{problem}\label{prob:formulation}
Compute a feedback controller $\mu:\mathcal{X}\times[0,\infty)\mapsto\mathcal{U}$ under which system \eqref{eq:dynamic} satisfies the given LTL specification $\varphi$ belonging to $LTL_{\setminus\bigcirc}$. That is, compute a controller $\mu$ such that $Trace\left(\mathbf{x}^\mu\right)\models\varphi$.
\end{problem}

\section{Solution Approach}\label{sec:solution}

We present a framework to solve Problem \ref{prob:formulation} in this section. We first introduce two types of CBFs. Given an LTL specification $\varphi$, we then present how to design CBFs using the automaton of the LTL specification. We construct a sequence of LTL formulae that correspond to an accepting run on the DRA of the LTL specification. Then we define a time varying CBF for each formula. We show that satisfying each formula is equivalent to guaranteeing the positivity of the corresponding CBF. Then we compute the controllers that ensure the CBF associated with each formula to be positive, and hence satisfies the LTL specification.

\subsection{Control Barrier Function}

In the following, we introduce time varying zeroing CBF (ZCBF) \cite{xu2018constrained} and finite time convergence CBF (FCBF).

\begin{definition}[Time Varying Zeroing CBF (ZCBF) \cite{xu2018constrained}]\label{def:ZCBF}
Consider a dynamical system \eqref{eq:dynamic} and a continuously differentiable function $h:\mathcal{X}\times[0,\infty)\mapsto\mathbb{R}$. If there exists a locally Lipschitz extended class $\mathcal{K}$ function $\alpha$ such that for all $x(t)\in\mathcal{X}$ the following inequality holds
\begin{multline}\label{eq:ZCBF}
    \sup_{u\in\mathcal{U}}\bigg\{\frac{\partial h(x,t)}{\partial x}f(x)+\frac{\partial h(x,t)}{\partial x}g(x)u+\frac{\partial h(x,t)}{\partial t}\\
    +\alpha(h(x,t))\bigg\}\geq 0,
\end{multline}
then function $h$ is a ZCBF.
\end{definition}

Given a ZCBF $h$, the set of controllers satisfying \eqref{eq:ZCBF} is represented as $\mathcal{U}_{Z}(x,t)=\{\mu|\frac{\partial h(x,t)}{\partial x}f(x)+\frac{\partial h(x,t)}{\partial x}g(x)\mu(x,t)\\+\frac{\partial h(x,t)}{\partial t}+\alpha(h(x,t))\geq 0\}$. The following proposition \cite{xu2018constrained} characterizes $\mathcal{U}_{Z}(x,t)$.
\begin{proposition}\label{prop:ZCBF}
Let $\mathcal{C}(t)=\{x|h(x,t)\geq0\}$, where $h:\mathcal{X}\times[0,\infty)\mapsto\mathbb{R}$. Consider a feedback controller $\mu(x,t)\in\mathcal{U}_Z(x,t)$. If $h$ is a ZCBF, then for all $x\in\mathcal{C}(t)$ and $t\geq0$, $\mu(x,t)$ guarantees the set $\mathcal{C}(t)$ to be forward invariant.
\end{proposition}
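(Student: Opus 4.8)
The plan is to reduce the forward-invariance claim to a scalar comparison argument carried out along closed-loop trajectories. First I would fix a controller $\mu(x,t)\in\mathcal{U}_Z(x,t)$ and an initial condition $x(0)=x_0\in\mathcal{C}(0)$, so that $h(x_0,0)\geq 0$, and let $\mathbf{x}$ denote the resulting closed-loop trajectory of \eqref{eq:dynamic} (assuming it exists and is unique on the relevant time interval; since $f,g$ are locally Lipschitz this reduces to a mild regularity requirement on $\mu$, which the quadratic-program controllers constructed in the sequel will be shown to meet). I would then introduce the scalar signal $\eta(t):=h(\mathbf{x}(t),t)$ and differentiate it by the chain rule, obtaining $\dot\eta(t)=\frac{\partial h(x,t)}{\partial x}\big(f(x)+g(x)\mu(x,t)\big)+\frac{\partial h(x,t)}{\partial t}$. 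By the definition of $\mathcal{U}_Z(x,t)$, this yields the differential inequality $\dot\eta(t)\geq-\alpha(\eta(t))$ on the interval of existence.

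Next I would set up the comparison system $\dot y=-\alpha(y)$ with $y(0)=\eta(0)=h(x_0,0)\geq 0$. Because $\alpha$ is a locally Lipschitz extended class-$\mathcal{K}$ function, this scalar ODE has a unique solution, and since $\alpha(0)=0$ the constant function $y\equiv 0$ is the solution issued from $y(0)=0$; by uniqueness and monotone dependence on initial data, the solution issued from any $y(0)\geq 0$ satisfies $y(t)\geq 0$ for all $t\geq 0$. Applying the Comparison Lemma (e.g., \cite{boyd2004convex}-style scalar comparison, or Khalil's Lemma~3.4) to $\dot\eta\geq-\alpha(\eta)$ with the matching initial condition gives $\eta(t)\geq y(t)\geq 0$ for all $t$ in the interval of existence, i.e., $h(\mathbf{x}(t),t)\geq 0$, so $\mathbf{x}(t)\in\mathcal{C}(t)$. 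Finally I would observe that, since $h$ is bounded and $\eta$ stays nonnegative, the trajectory cannot escape $\mathcal{C}(t)$ in finite time, so $\mathbf{x}(t)\in\mathcal{C}(t)$ for all $t\geq 0$, which is precisely forward invariance of $\mathcal{C}(t)$.

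The main obstacle I anticipate is not the comparison step itself but the well-posedness bookkeeping: the closed-loop vector field $f(x)+g(x)\mu(x,t)$ must be regular enough (piecewise continuous in $t$, locally Lipschitz in $x$, or at least admitting Carathéodory solutions) for the trajectory $\mathbf{x}$ to exist and be unique and for the chain rule applied to $h(\mathbf{x}(t),t)$ to be valid; one must also ensure the trajectory remains in $\mathcal{X}$, where $h$ and inequality \eqref{eq:ZCBF} are defined. Handling this cleanly typically requires either an explicit regularity assumption on the selected $\mu$ or an appeal to the structure of the specific controllers introduced later. Everything else — the chain-rule computation, the sign analysis of $\dot y=-\alpha(y)$, and the invocation of the Comparison Lemma — is routine.
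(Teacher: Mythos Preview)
The paper does not actually prove this proposition; it is stated as a known result and attributed to \cite{xu2018constrained}, so there is no in-paper argument to compare against. Your comparison-lemma reduction---define $\eta(t)=h(\mathbf{x}(t),t)$, derive $\dot\eta\geq-\alpha(\eta)$ from membership in $\mathcal{U}_Z$, compare with $\dot y=-\alpha(y)$, and use $\alpha(0)=0$ plus local Lipschitzness of $\alpha$ to conclude $y(t)\geq 0$ whenever $y(0)\geq 0$---is precisely the standard route behind this class of results and is correct. Your caveats about closed-loop regularity (existence/uniqueness of trajectories, validity of the chain rule) are appropriate and are exactly the hypotheses typically left implicit in such statements.
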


Motivated by finite time convergence CBF in \cite{li2018formally}, we define time varying FCBF as follows.

\begin{definition}[Time Varying Finite Time Convergence CBF (FCBF)]\label{def:FCBF}
Consider a dynamical system \eqref{eq:dynamic} and a continuously differentiable function $h:\mathcal{X}\times[0,\infty)\mapsto\mathbb{R}$. If there exist constants $\rho\in[0,1)$ and $\gamma>0$ such that for all $x(t)\in\mathcal{X}$ the following inequality holds
\begin{multline}\label{eq:FCBF}
    \sup_{u\in\mathcal{U}}\bigg\{\frac{\partial h(x,t)}{\partial x}f(x)+\frac{\partial h(x,t)}{\partial x}g(x)u\\
    +\frac{\partial h(x,t)}{\partial t}+\gamma\cdot \text{sgn}(h(x,t))|h(x,t)|^\rho\bigg\}\geq 0,
\end{multline}
then function $h$ is a FCBF.
\end{definition}

Given an FCBF $h$, the set of controllers that satisfy \eqref{eq:FCBF} is represented as $\mathcal{U}_F(x,t)=\{\mu|\frac{\partial h(x,t)}{\partial x}f(x)+\frac{\partial h(x,t)}{\partial x}g(x)\mu(x,t)+\frac{\partial h(x,t)}{\partial t}+\gamma\cdot \text{sgn}(h(x,t))|h(x,t)|^\rho\geq 0\}$. The following proposition extends the result in \cite{li2018formally} on time invariant FCBF and characterizes $\mathcal{U}_{F}(x,t)$.
\begin{proposition}\label{prop:FCBF}
Let $\mathcal{C}(t)=\{x|h(x,t)\geq0\}$, where $h:\mathcal{X}\times[0,\infty)\mapsto\mathbb{R}$. Consider a feedback controller $\mu(x,t)\in\mathcal{U}_F(x,t)$. If $h$ is an FCBF, then for any initial state $x_0\in\mathcal{X}$, controller $\mu(x,t)$ guarantees that the system will be steered to the set $\mathcal{C}(t)$ within finite time $0<T<\infty$ such that $x(T)\in\mathcal{C}(T)$. The convergence time $T=\frac{|h(x_0,0)|^{(1-\rho)}}{\gamma(1-\rho)}$. Moreover, the system remains in $\mathcal{C}(t')$ for all $t'\geq T$.

\end{proposition}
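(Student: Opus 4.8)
The plan is to mirror the classical finite-time stability argument (as in \cite{li2018formally}) but carried out on the time-varying function $h(x,t)$ along trajectories of \eqref{eq:dynamic}. Define $H(t) := h(\mathbf{x}^\mu(t), t)$, the value of the candidate FCBF evaluated along the closed-loop trajectory under a controller $\mu(x,t) \in \mathcal{U}_F(x,t)$. By the chain rule and the control-affine dynamics \eqref{eq:dynamic},
\[
\dot{H}(t) = \frac{\partial h(x,t)}{\partial x} f(x) + \frac{\partial h(x,t)}{\partial x} g(x) \mu(x,t) + \frac{\partial h(x,t)}{\partial t},
\]
so the defining inequality of $\mathcal{U}_F(x,t)$ gives the scalar differential inequality $\dot{H}(t) \geq -\gamma \,\mathrm{sgn}(H(t))\, |H(t)|^{\rho}$ for all $t \geq 0$. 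The first step is to argue that along this inequality $H$ cannot stay negative forever: when $H(t) < 0$ we have $\dot H(t) \ge \gamma |H(t)|^\rho > 0$, so $H$ is strictly increasing while negative.

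The second step is to integrate the comparison inequality to get the convergence time. While $H(t) < 0$, write $H(t) = -W(t)$ with $W(t) > 0$; then $\dot W(t) \leq -\gamma W(t)^{\rho}$, i.e. $\frac{d}{dt} W(t)^{1-\rho} \leq -\gamma(1-\rho)$ (using $\rho \in [0,1)$ so the exponent $1-\rho \in (0,1]$ is positive and the map $w \mapsto w^{1-\rho}$ is increasing). Integrating from $0$ to $t$ yields $W(t)^{1-\rho} \leq W(0)^{1-\rho} - \gamma(1-\rho)t = |h(x_0,0)|^{1-\rho} - \gamma(1-\rho)t$. The right-hand side hits zero at $T = \frac{|h(x_0,0)|^{1-\rho}}{\gamma(1-\rho)}$, and since $W(t)^{1-\rho} \geq 0$ this forces $W(T) = 0$, hence $H(T) = h(x(T),T) = 0$, i.e. $x(T) \in \mathcal{C}(T)$. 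If $x_0$ already satisfies $h(x_0,0) \geq 0$ the bound is vacuous and $T$ can be taken $0$; otherwise $0 < T < \infty$.

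The third step is forward invariance for $t' \geq T$: once $H(t') = 0$, the inequality $\dot H \geq -\gamma\,\mathrm{sgn}(H)|H|^\rho$ has the same structure as the ZCBF inequality \eqref{eq:ZCBF} with the extended class $\mathcal{K}$ function $\alpha(h) = \gamma\,\mathrm{sgn}(h)|h|^{\rho}$ (which is continuous, strictly increasing, and vanishes at $0$), so by the argument of Proposition \ref{prop:ZCBF} — a Nagumo-type / comparison-lemma conclusion that $h$ cannot become negative once it is nonnegative — the set $\mathcal{C}(t)$ is forward invariant, giving $x(t') \in \mathcal{C}(t')$ for all $t' \geq T$.

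I expect the main obstacle to be the regularity/well-posedness bookkeeping rather than the algebra: the right-hand side $-\gamma\,\mathrm{sgn}(h)|h|^{\rho}$ is only Hölder (not Lipschitz) at $h=0$, so one must invoke a comparison lemma valid for such right-hand sides rather than uniqueness of solutions, and one must be slightly careful that $\mu \in \mathcal{U}_F$ together with local Lipschitzness of $f,g$ yields a well-defined absolutely continuous trajectory on which $H$ is differentiable almost everywhere. Handling the boundary case $\rho = 0$ (where $\mathrm{sgn}(h)|h|^0$ is just $\mathrm{sgn}(h)$) and the transition instant $t = T$ cleanly is the only other point needing care.
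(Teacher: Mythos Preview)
Your proposal is correct and follows essentially the same approach as the paper: your auxiliary function $W(t)=-H(t)$ on $\{H<0\}$ is exactly the paper's Lyapunov function $V(x,t)=\max\{-h(x,t),0\}$, and both arguments reduce to the scalar inequality $\dot V\le -\gamma V^{\rho}$ to obtain the convergence time $T=|h(x_0,0)|^{1-\rho}/(\gamma(1-\rho))$ and forward invariance thereafter. The only cosmetic difference is that the paper outsources the integration step to a cited finite-time stability theorem (Theorem~4.1 of \cite{haddad2008finite}), whereas you carry it out explicitly.
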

\begin{proof}
The proof follows \cite{li2018formally}. Construct a Lyapunov function $V(x,t)=\max\{-h(x,t),0\}$. We can verify that $V(x,t)=0$ for all $x(t)\in\mathcal{C}(t)$, $V(x,t)>0$ for all $x(t)\in\mathcal{X}\setminus\mathcal{C}(t)$, and $\frac{\text{d}}{\text{d}t}V(x,t)\leq\gamma V(x,t)^\rho$. By Theorem 4.1 in \cite{haddad2008finite}, finite time stability holds for system \eqref{eq:dynamic}. Thus if $x_0\in\mathcal{C}(t)$, controllers in $\mathcal{U}_F(x,t)$ render $\mathcal{C}(t)$ forward invariant. If $x_0\notin\mathcal{C}(t)$, then the system converges to $\mathcal{C}(t)$ within finite time $T=\frac{|h(x_0,0)|^{(1-\rho)}}{\gamma(1-\rho)}$.
\end{proof}

\subsection{Design of Control Barrier Functions}\label{sec:CBF}

In this subsection, we first construct a sequence of LTL formulae so that satisfying all formulae is equivalent to satisfying the given LTL formula $\varphi$. Then we show how to design time varying CBFs for each formula.

Given an LTL specification $\varphi$, we compute the DRA associated with $\varphi$, and pick an accepting run $\eta=q_0,\ldots,q_J,(q_{J+1},\ldots,q_{J+N})^\omega$ of the DRA. The complexity of constructing the DRA is doubly exponential in the size of the formula in the worst-case as in the existing works on both abstraction-free and abstraction-based LTL synthesis \cite{ding2014optimal,horowitz2014compositional,wongpiromsarn2015automata}. However, we note that several important classes of LTL formulas have DRAs of polynomial size \cite{babiak2013effective}, and that our approach mitigates the exponential complexity of computing a finite state abstraction.

Rewriting $\eta$ into prefix-suffix form, we have that the sequence of states $q_0,\ldots,q_J$ forms the prefix $\eta_{pref}$, and the sequence of states $q_{J+1},\ldots,q_{J+N}$ forms the suffix $\eta_{suff}$. We denote the transition from state $q_j$ to $q_{j+1}$ as $\eta_j$, and denote the input word of transition $\eta_j$ as $\phi_j$. The input word $\phi_j$ is in the form of conjunction or disjunction of atomic propositions \cite{baier2008principles}, i.e., $\phi_j=\pi_1\bowtie\ldots\bowtie\pi_k$ where $\pi_i\in\Pi$ is an atomic proposition for all $i=1,\ldots,k$ and $\bowtie\in\{\land,\lor\}$.

Given the accepting run $\eta$, we construct a sequence of formulae $\{\psi_j|j=0,1,\ldots,J+N\}$, where $\psi_j$ corresponds to transition $\eta_j$ as follows. We denote the input word corresponding to the self-transition at state $q_j$ as $\Phi_j$, i.e., $\delta(q_j,\Phi_j)=q_j$. We also note that $\phi_j$ is the input word corresponding to a transition from $q_j$ to $q_{j+1}$. We then construct a formula $\psi_j$ corresponding to transition $\eta_j$ as 
\begin{equation}\label{eq:sub-formula}
    \psi_j=\Phi_j~ \mathbf{U}~\Box\left(\phi_j \wedge \Phi_{j+1}\right).
\end{equation}
Formula $\psi_j$ indicates that no transition starting from state $q$ should occur except self-transition and $\eta_j$. Since both prefix and suffix of $\eta$ are over finite horizon, only a finite number of formulae $\{\psi_j|j=0,1,\ldots,J+N\}$ are generated. 

We then assign a sequence of time instants $t_1<\ldots<t_J$ as the deadlines of each transition $\eta_0,\eta_1,\ldots,\eta_J$ of $\eta_{pref}$. The deadlines of the transitions of the suffix can be generated as $n\Delta+t_{J+1}$, where $n$ is a nonnegative integer and $\Delta\geq0$. We additionally let $t_0=0<t_1$. 
Given the sequence of deadlines, we define the active time of each formula $\psi_j$ as $[t_j,t_{j+1}]$, during which formula $\psi_j$ must be satisfied. There are two advantages of defining the active time of each formula $\psi_j$. First, although each formula $\psi_j$ needs to be interpreted over infinite runs, the active time enables us to interpret each $\psi_j$ over finite runs. That is, formula $\psi_j$ needs to be satisfied during $[t_j,t_{j+1}]$. For time $t>t_{j+1}$, formula $\psi_j$ can be violated. Second, the active time allows our approach to satisfy multiple, sequential constraints (e.g., reaching disjoint regions A and B) that cannot be satisfied simultaneously.

Given a time interval $[t,t']$ and controller $\mu$, we let $\mathbf{x}^\mu([t,t'])$ be the system trajectory during time interval $[t,t']$ under controller $\mu$. The trace of $\mathbf{x}^\mu([t,t'])$ is denoted as $Trace\left(\mathbf{x}^\mu([t,t'])\right)$. Denote the system state under controller $\mu$ at time $t$ as $\mathbf{x}^\mu(t)$. We then show the effectiveness of \eqref{eq:sub-formula} by analyzing the relationship between satisfying each formula $\psi_j$ and run $\eta$.

\begin{lemma}\label{lemma:transition}
    Let $\eta$ be an accepting run and $\psi_j$ be a formula in the form of \eqref{eq:sub-formula} whose active time is $[t_j,t_{j+1}]$. If $Trace\left(\mathbf{x}^\mu([0,t_j])\right)$ steers the DRA from $q_0$ to $q_j$, and $Trace\left(\mathbf{x}^\mu([t_j,t_{j+1}])\right)\models\psi_j$, then the DRA transitions from state $q_j$ to $q_{j+1}$ during time interval $[t_j,t_{j+1}]$. Moreover, the DRA remains in state $q_{j+1}$ until at least time $t_{j+1}$.
\end{lemma}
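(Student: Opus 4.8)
The plan is to track the DRA's run as it reads the trace over the active interval $[t_j,t_{j+1}]$ and to show that the structure of $\psi_j$ forces this run to be $q_j\to q_{j+1}$. The first hypothesis does part of the bookkeeping for free: since $Trace(\mathbf{x}^\mu([0,t_j]))$ steers the DRA from $q_0$ to $q_j$, the DRA is in state $q_j$ at the start of the active interval, so it remains only to analyze how the DRA processes the finite word $w:=Trace(\mathbf{x}^\mu([t_j,t_{j+1}]))$ starting from $q_j$.

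Next I would read off from $\psi_j=\Phi_j\,\mathbf{U}\,\Box(\phi_j\wedge\Phi_{j+1})$, interpreted over the finite run covering $[t_j,t_{j+1}]$ as set up in Section \ref{sec:CBF}, a factorization $w=w_1w_2$ with $w_2$ nonempty, every letter of $w_1$ satisfying $\Phi_j$, and every letter of $w_2$ satisfying $\phi_j\wedge\Phi_{j+1}$. This is exactly the until semantics: there is a position from which $\Box(\phi_j\wedge\Phi_{j+1})$ holds, and before it $\Phi_j$ holds throughout.

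Then I would walk the DRA run letter by letter in three phases. While reading $w_1$, each letter satisfies $\Phi_j$ and $\delta(q_j,\Phi_j)=q_j$, so the DRA remains at $q_j$. On the first letter $\sigma^\star$ of $w_2$, we have $\sigma^\star\models\phi_j$, and since $\phi_j$ labels the edge $\eta_j$ from $q_j$ to $q_{j+1}$, $\delta(q_j,\sigma^\star)=q_{j+1}$; this is the asserted transition, and by Definition \ref{def:trace} it occurs at the time instant associated with $\sigma^\star$, which lies in $[t_j,t_{j+1}]$. On each subsequent letter of $w_2$, the letter satisfies $\Phi_{j+1}$, the self-loop word at $q_{j+1}$, so $\delta(q_{j+1},\Phi_{j+1})=q_{j+1}$ and the DRA stays at $q_{j+1}$ through the end of $w$, i.e., until at least $t_{j+1}$. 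Combining the three phases yields both claims of the lemma.

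I expect the only point that needs care --- rather than a deep argument --- to be the consistency of the Boolean input words with the deterministic transition function: a priori a letter $\sigma\in 2^\Pi$ might satisfy several of the input words on edges leaving $q_j$, so one must check that $\sigma^\star\models\phi_j$ really sends $q_j$ to $q_{j+1}$ and not to some other state. Determinism of the DRA settles this: $\delta(q_j,\sigma^\star)$ is uniquely defined and, since $\sigma^\star\models\phi_j$, it equals $q_{j+1}$; in particular, when $q_j\neq q_{j+1}$ the letter $\sigma^\star$ cannot also satisfy $\Phi_j$ (that would give $q_j=\delta(q_j,\sigma^\star)=q_{j+1}$), while the degenerate case $q_j=q_{j+1}$ is trivial. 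A related but minor point to flag is that $\psi_j$ syntactically uses the infinite-horizon operator $\Box$, whereas in the argument it constrains only the finitely many letters of $w_2$, consistent with the finite-run reading of the active time.
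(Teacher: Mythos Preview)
Your proposal is correct and follows essentially the same approach as the paper: both arguments unpack the $\mathbf{U}$ semantics to split the trace into a $\Phi_j$-prefix and a $(\phi_j\wedge\Phi_{j+1})$-suffix, invoke determinism of the DRA to pin down the transition to $q_{j+1}$, and then use $\Phi_{j+1}$ as the self-loop label to keep the automaton at $q_{j+1}$. The only stylistic difference is that the paper phrases the first part as a contradiction (assume the DRA leaves $q_j$ to some $q'\neq q_{j+1}$ and derive nondeterminism), whereas you give a direct three-phase walk of the run; your version is in fact slightly more explicit in also verifying that the DRA stays at $q_j$ while reading $w_1$, and in flagging the degenerate case $q_j=q_{j+1}$.
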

\begin{proof}
    We first prove that if $Trace\left(\mathbf{x}^\mu([0,t_j])\right)$ steers the DRA from $q_0$ to $q_j$, and $Trace\left(\mathbf{x}^\mu([t_j,t_{j+1}])\right)\models\psi_j$, then the DRA transitions from state $q_j$ to $q_{j+1}$ during time interval $[t_j,t_{j+1}]$. We prove by contradiction. Suppose the current state of the DRA is $q_j$ and $Trace\left(\mathbf{x}^\mu([t_j,t_{j+1}])\right)\models\psi_j$, while the DRA transitions from state $q_j$ to some state $q'\neq q_{j+1}$. By the semantics of until operator $\mathbf{U}$, there must exist some time $t\in[t_j,t_{j+1}]$ such that $L(\mathbf{x}^\mu(t'))\models\phi_j\land\Phi_{j+1}$ for all $t'\in[t,t_{j+1}]$ in order to make $Trace\left(\mathbf{x}^\mu([t_j,t_{j+1}])\right)\models\psi_j$ hold. Then by the semantics of and operator $\land$, $L(\mathbf{x}^\mu(t))\models\phi_j\land\Phi_{j+1}$ implies that $L(\mathbf{x}^\mu(t))\models\phi_j$. Since $\phi_j$ is the input associated with the transition from $q_j$ to $q_{j+1}$, then $q'=q_{j+1}$. Otherwise, the DRA contains nondeterminism which conflicts Definition \ref{def:DRA}.
    
    We then prove that the DRA remains in $q_{j+1}$ until at least time $t_{j+1}$. Suppose the DRA transitions from $q_{j+1}$ to some state $q$ before $t_{j+1}$. This is equivalent to the fact that there exist some state $q\in\mathcal{N}(q_{j+1})$ and time $t\in[t_j,t_{j+1}]$ such that $L(\mathbf{x}^\mu(t))\models\phi_{j+1}^q$, where $\phi_{j+1}^q$ is the input word associated with transition from state $q_{j+1}$ to some neighbor state $q$. However, this contradicts $\Phi_{j+1}$, and thus the DRA cannot transition to some state $q\in\mathcal{N}(q_{j+1})$. By the definition of neighbor states $\mathcal{N}(q_{j+1})$, the DRA can only take the self-transition at $q_{j+1}$.
\end{proof}
Inducting the results on Lemma \ref{lemma:transition} gives the following result.
\begin{corollary}\label{coro:run}
    If $Trace\left(\mathbf{x}^\mu([t_j,t_{j+1}])\right)\models\psi_j$ for all $j=0,1,\ldots$, then $Trace\left(\mathbf{x}^\mu\right)\models\varphi$.
\end{corollary}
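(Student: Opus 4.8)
The plan is to prove Corollary~\ref{coro:run} by induction on $j$, using Lemma~\ref{lemma:transition} as the inductive step, and then argue that the resulting run on the DRA is exactly the accepting run $\eta$, which by construction witnesses satisfaction of $\varphi$.

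\textbf{Base case.} First I would observe that at time $t_0 = 0$ the DRA is in its initial state $q_0$, so trivially $Trace(\mathbf{x}^\mu([0,t_0]))$ steers the DRA from $q_0$ to $q_0$. Combined with the hypothesis $Trace(\mathbf{x}^\mu([t_0,t_1])) \models \psi_0$, Lemma~\ref{lemma:transition} (applied with $j=0$) gives that the DRA transitions from $q_0$ to $q_1$ during $[t_0,t_1]$ and remains in $q_1$ until at least $t_1$.

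\textbf{Inductive step.} Assuming that $Trace(\mathbf{x}^\mu([0,t_j]))$ steers the DRA from $q_0$ to $q_j$ (and the DRA is in $q_j$ at time $t_j$), the hypothesis $Trace(\mathbf{x}^\mu([t_j,t_{j+1}])) \models \psi_j$ lets me invoke Lemma~\ref{lemma:transition} to conclude the DRA transitions to $q_{j+1}$ within $[t_j,t_{j+1}]$ and stays there until at least $t_{j+1}$; this re-establishes the hypothesis for $j+1$ and closes the induction. I would note that the same argument runs through the suffix: since the deadlines of the suffix transitions are $n\Delta + t_{J+1}$ for nonnegative integers $n$, and $\psi_j$ for $j$ in the suffix is defined cyclically (indices taken modulo $N$ after $J$), the induction continues indefinitely, so the DRA visits the sequence $q_0, q_1, \dots, q_J, (q_{J+1}, \dots, q_{J+N})^\omega$.

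\textbf{Conclusion.} The run traced out by $\mathbf{x}^\mu$ on the DRA is precisely $\eta = q_0,\ldots,q_J,(q_{J+1},\ldots,q_{J+N})^\omega$, which was chosen as an accepting run; hence it intersects some $B(s)$ finitely often and the corresponding $\Gamma(s)$ infinitely often. By the definition of acceptance for a DRA and the fact that the DRA recognizes the language of $\varphi$, it follows that $Trace(\mathbf{x}^\mu) \models \varphi$. The main obstacle I anticipate is the bookkeeping at the prefix–suffix boundary and within the suffix loop: I must be careful that the cyclic indexing of the $\psi_j$ and the arithmetic-progression deadlines are consistent with Lemma~\ref{lemma:transition}'s statement (whose active-time intervals $[t_j,t_{j+1}]$ must tile $[0,\infty)$), and that "remains in $q_{j+1}$ until at least $t_{j+1}$" chains correctly into "the DRA is in $q_{j+1}$ at time $t_{j+1}$" so that the hypothesis of the next application is exactly met — there is a subtle continuity point at the shared endpoint $t_{j+1}$ that should be addressed, but it follows from clause~5 of Definition~\ref{def:trace} together with the closedness of the sets $\llbracket\pi\rrbracket$.
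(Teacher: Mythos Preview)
Your proposal is correct and follows exactly the approach the paper takes: the paper states only that the corollary follows by ``inducting the results on Lemma~\ref{lemma:transition},'' and your base case, inductive step, and conclusion spell out precisely that induction. Your additional remarks on the prefix--suffix boundary and the continuity at the shared endpoint $t_{j+1}$ go beyond what the paper writes explicitly, but they are consistent with the intended argument.
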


Lemma \ref{lemma:transition} and Corollary \ref{coro:run} imply that, in order to ensure that the specification is satisfied, it suffices to ensure that the trajectory under controller $\mu$ satisfies each $\psi_{j}$ within its active time $[t_{j},t_{j+1}]$. In what follows, we construct a set of CBFs that will be used to ensure satisfaction of each $\psi_{j}$.

In the following, we design CBFs for $\Phi_j$ and $\phi_j\land\Phi_{j+1}$. We first define a CBF $h_\pi$ for each atomic proposition $\pi$ that is involved in $\psi_j$. We consider CBFs in the form of $h_{\pi}(x,t)=M_{\pi}(t)+Z_{\pi}(x)$ for all $\pi$, where $M_{\pi}(t)$ and $Z_{\pi}(x)$ are called guard function and state function, respectively. The state function $Z_{\pi}(x)$ is a function of state $x$ that captures if the state $x$ is in $\llbracket\pi\rrbracket$, i.e., $\llbracket\pi\rrbracket=\{x|Z_\pi(x)\geq0\}$. The guard function $M_{\pi}(t)=\frac{E_{\pi}}{1+e^{-b_{\pi}(t+c_{\pi})}}-\epsilon_{\pi}$ is a logistic function, where $E_{\pi}>0$, $b_{\pi}>0$, and $\epsilon_{\pi}\geq 0$. The guard function $M_{\pi}(t)$ is introduced so that each atomic proposition $\pi$, and hence $\psi_j$, only need to be satisfied during their active time. 

We then show how to choose $E_{\pi},b_{\pi},c_{\pi}$, and $\epsilon_{\pi}$ for each $\pi$. First, if atomic proposition $\pi$ is satisfied at time $t=0$, then $h_{\pi}(x_0,0)\geq0$. If  $\pi$ is not satisfied at time $t=0$, then $h_{\pi}(x_0,0)<0$. These two requirements are captured by \eqref{eq:initial constraint 1} and \eqref{eq:initial constraint 2}. Second, given the deadline $t_j$ of atomic proposition $\pi$, we have $M_{\pi}(t_j)\leq 0$, as shown in \eqref{eq:negative constraint}. To summarize, we have the following inequalities:
\begin{subequations}\label{eq:coefficient}
\begin{align}
    & \frac{E_{\pi}}{1+e^{-b_{\pi}c_{\pi}}}-\epsilon_{\pi}+Z_{\pi}(x_0)\geq 0,\mbox{ if }\pi\in L(x_0)\label{eq:initial constraint 1}\\
    &\frac{E_{\pi}}{1+e^{-b_{\pi}c_{\pi}}}-\epsilon_{\pi}+Z_{\pi}(x_0)< 0,\mbox{ if }\pi\notin L(x_0)\label{eq:initial constraint 2}\\
    &\frac{E_{\pi}}{1+e^{-b_{\pi}(t_j+c_{\pi})}}-\epsilon_{\pi}\leq 0,\label{eq:negative constraint}\\
    &E_{\pi}>0,b_{\pi}>0,\epsilon_{\pi}\geq0\label{eq:para constraint}
\end{align}
\end{subequations}

Inequalities \eqref{eq:coefficient} are solved as follows. We first pick some $b_{\pi}>0$ and $c_{\pi}$ such that $c_{\pi}\leq t_{j+1}$ if $\pi$ is involved in $\Phi_j$ or $\phi_j$, and $c_{\pi}\leq t_{j+2}$ if $\pi$ is involved in $\Phi_{j+1}$. Fixing the values of $b_{\pi}$ and $c_{\pi}$, then $E_{\pi}$ and $\epsilon_{\pi}$ can be obtained by solving the linear inequalities \eqref{eq:coefficient}. 
We characterize the CBFs obtained by solving \eqref{eq:coefficient} using the following lemma.
\begin{lemma}\label{lemma:CBF}
    Let $t_j$ be the deadline of atomic proposition $\pi$, and $h_\pi$ be the CBF obtained by solving \eqref{eq:coefficient}. For any $t\leq t_j$, if $h_\pi(x,t)\geq0$, then $x(t)\in\llbracket\pi\rrbracket$.
\end{lemma}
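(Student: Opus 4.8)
The plan is to exploit the additive structure $h_\pi(x,t)=M_\pi(t)+Z_\pi(x)$ together with the fact that $\llbracket\pi\rrbracket=\{x\mid Z_\pi(x)\geq 0\}$. It suffices to show that $M_\pi(t)\leq 0$ for every $t\leq t_j$: once that is established, $h_\pi(x,t)\geq 0$ forces $Z_\pi(x)\geq -M_\pi(t)\geq 0$, hence $x(t)\in\llbracket\pi\rrbracket$ by the assumed representation of $\llbracket\pi\rrbracket$.

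So the core of the argument is a monotonicity claim about the guard function $M_\pi(t)=\frac{E_\pi}{1+e^{-b_\pi(t+c_\pi)}}-\epsilon_\pi$. First I would observe that, since $E_\pi>0$ and $b_\pi>0$, the map $t\mapsto \frac{1}{1+e^{-b_\pi(t+c_\pi)}}$ is strictly increasing in $t$ (its derivative is $\frac{b_\pi e^{-b_\pi(t+c_\pi)}}{(1+e^{-b_\pi(t+c_\pi)})^2}>0$), and therefore $M_\pi$ is strictly increasing in $t$. Next I would invoke constraint \eqref{eq:negative constraint} from the system \eqref{eq:coefficient} that the coefficients are chosen to satisfy, namely $\frac{E_\pi}{1+e^{-b_\pi(t_j+c_\pi)}}-\epsilon_\pi\leq 0$, i.e. $M_\pi(t_j)\leq 0$. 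Combining monotonicity with this boundary condition: for any $t\leq t_j$ we get $M_\pi(t)\leq M_\pi(t_j)\leq 0$. This closes the argument as sketched in the previous paragraph.

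I do not anticipate a serious obstacle here; the lemma is essentially a bookkeeping consequence of how the coefficients were defined in \eqref{eq:coefficient}. The one point that deserves a careful sentence is that the constraint $M_\pi(t_j)\leq 0$ is precisely \eqref{eq:negative constraint}, and that the linear program \eqref{eq:coefficient} is posed with that constraint included, so the CBF "obtained by solving \eqref{eq:coefficient}" genuinely satisfies it; the statement of the lemma hypothesizes that $t_j$ is the deadline of $\pi$, which is exactly the $t_j$ appearing in \eqref{eq:negative constraint}. A secondary remark worth making is that we only use $E_\pi>0$ and $b_\pi>0$ from \eqref{eq:para constraint} to get strict monotonicity — the sign of $c_\pi$ and the value of $\epsilon_\pi$ are irrelevant to this direction of the implication. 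If one wanted the converse (that outside the active window the CBF may legitimately be negative even when $\pi$ holds), that would require the logistic to eventually exceed $\epsilon_\pi$, but that is not needed for this lemma.
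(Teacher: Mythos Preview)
Your proposal is correct and matches the paper's own proof essentially line for line: the paper also argues that \eqref{eq:negative constraint} gives $M_\pi(t_j)\leq 0$, that \eqref{eq:para constraint} makes $M_\pi$ monotone increasing so $M_\pi(t)\leq 0$ for all $t\leq t_j$, and then concludes $Z_\pi(x)=h_\pi(x,t)-M_\pi(t)\geq 0$. The only difference is that you spell out the derivative of the logistic explicitly, which is a harmless elaboration.
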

\begin{proof}
    Inequality \eqref{eq:negative constraint} indicates that $M_{\pi}(t_j)\leq 0$, where $t_j$ is the deadline of ${\pi}$. By \eqref{eq:para constraint}, the guard function $M_{\pi}(t)$ is monotone increasing. Therefore, for all $t\leq t_j$, $M_{\pi}(t)\leq 0$ holds. By the definition of $h_{\pi}(x,t)$, we have that $Z_{\pi}(x)= h_{\pi}(x,t)-M_{\pi}(t)$. Given $M_{\pi}(t)\leq0$ for all $t\leq t_j$ and $h_{\pi}(x,t)\geq0$, we have $Z_{\pi}(x)\geq0$, and thus the system state $x(t)$ is in the region $\{x(t)|Z_{\pi}(x(t))\geq 0\}=\llbracket\pi\rrbracket$.
\end{proof}

Given a CBF $h_\pi$ for each atomic proposition $\pi$ that is involved in $\psi_j$, we compute the CBFs for $\Phi_j$ and $\phi_j\land\Phi_{j+1}$. We note that $\Phi_j$ and $\phi_j\land\Phi_{j+1}$ are both in the forms of conjunctions/disjunctions of atomic propositions \cite{baier2008principles}. We utilize the following definition to construct the CBF for $\Phi_j$ and $\phi_j\land\Phi_{j+1}$.
\begin{definition}\label{def:CBF construction}
    Consider a set of atomic proposition $\{\pi_i|i=1,\ldots,k\}$. Let $h_{\pi_i}:\mathbb{R}^n\times [0,\infty)\mapsto\mathbb{R}$ be the CBF of each $\pi_i$ defined as $h_{\pi_i}(x,t)=M_{\pi_i}(t)+Z_{\pi_i}(x)$ for each atomic proposition $\pi_i$, where $M_{\pi_i}(t)$ is computed by \eqref{eq:coefficient} and $Z_{\pi_i}(t)$ is defined as $\{x|Z_{\pi_i}(x)\geq0\}=\llbracket\pi_i\rrbracket$. Consider a formula $\phi'=\pi_1\bowtie\ldots\bowtie\pi_{k-1}$, where $\bowtie\in\{\land,\lor\}$. Let $h_{\phi'}$ be the CBF of $\phi'$. Then the CBF of formula $\phi=\phi'\land\pi_k$ is
\begin{equation}\label{eq:min approximation}
    h_\phi(x,t)=-\ln\left[\exp{(-h_{\phi'}(x,t))}+\exp{(-h_{\pi_k}(x,t))}\right].
\end{equation}
    The CBF $h_\phi$ of formula $\phi=\phi'\lor\pi_k$ for some $\lambda>0$ is
\begin{equation}\label{eq:max approximation}
    h_\phi(x,t)=\frac{h_{\phi'}(x,t)e^{\lambda h_{\phi'}(x,t)}+h_{\pi_k}(x,t)e^{\lambda h_{\pi_k}(x,t)}}{e^{\lambda h_{\phi'}(x,t)}+e^{\lambda h_{\pi_k}(x,t)}}.
\end{equation}
\end{definition}
Definition \ref{def:CBF construction} recursively defines the CBF for a formula $\phi$ in the form of $\phi=\pi_1\bowtie\ldots\bowtie\pi_{k}$, where $\bowtie\in\{\land,\lor\}$. By Lemma \ref{lemma:min operator}, we have that CBFs \eqref{eq:min approximation} and \eqref{eq:max approximation} bound $\min\{h_{\phi'}(x,t),h_{\pi_k}(x,t)\}$ and $\max\{h_{\phi'}(x,t),h_{\pi_k}(x,t)\}$ from below, respectively. When $\Phi_j$ and $\phi_j\land\Phi_{j+1}$ are in the forms of $\pi_1\bowtie\ldots\bowtie\pi_{k}$, where $\bowtie\in\{\land,\lor\}$, their CBFs can be obtained by recursively applying Definition \ref{def:CBF construction}.

\begin{center}
  	\begin{algorithm}[!htp]
  		\caption{Algorithm for computing the CBFs for each formula $\psi_j$.}
  		\label{alg:CBF design}
  		\begin{algorithmic}[1]
  			\Procedure{CBF\_Design}{$\varphi$}
  			\State \textbf{Input}: LTL specification $\varphi$
  			\State \textbf{Output:} CBFs for each formula $\psi_j$
  			\State Compute the DRA associated with LTL specification $\varphi$, and the set of accepting runs on the DRA.
  			\State Pick an accepting run $\eta$ on the DRA, and identify each formula $\psi_j$ associated with each transition $\eta_j$ in $\eta$ as \eqref{eq:sub-formula}.
  			\State Specify a sequence of time $0<t_1<\ldots$ for accepting run $\eta$. 
  	        \State Pick a set of feasible coefficients for relations \eqref{eq:coefficient} for each atomic proposition $\pi$ involved in $\psi_j$.
  	        \State Recursively compute CBFs for $\Phi_j$ and $\phi_j\land\Phi_{j+1}$ using Definition \ref{def:CBF construction}.
  	        \State\Return CBFs for $\Phi_j$ and $\phi_j\land\Phi_{j+1}$
  			\EndProcedure        	
  		\end{algorithmic}
    \end{algorithm}
\end{center}

The procedure we used to design the CBFs for each $\psi_j$ is presented in Algorithm \ref{alg:CBF design}. We characterize the construction of CBFs for $\Phi_j$ and $\phi_j\land\Phi_{j+1}$ using the following proposition.
\begin{lemma}\label{lemma:compose CBF}
    Let $h_\phi$ be the CBF obtained by Algorithm \ref{alg:CBF design}, where $\phi\in\{\Phi_j,\phi_j\land\Phi_{j+1}\}$. For any time $t\in[t_j,t_{j+1}]$, if $h_\phi(x,t)\geq0$, then $L(x(t))\models\phi$.
\end{lemma}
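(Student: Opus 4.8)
\emph{Proof proposal.}
The plan is to argue by induction on the length $k$ of the prefix $\pi_1\bowtie\cdots\bowtie\pi_k$ that generates $\phi$, proving the slightly more general statement that the displayed implication holds for every sub-formula $\pi_1\bowtie\cdots\bowtie\pi_m$ ($m\le k$) produced along the recursion of Definition \ref{def:CBF construction}; the lemma is then the case where the sub-formula is $\phi$ itself. One ingredient used at every level is that the deadline assigned to each atomic proposition occurring in $\phi\in\{\Phi_j,\phi_j\land\Phi_{j+1}\}$ is at least $t_{j+1}$: propositions coming from $\Phi_j$ or $\phi_j$ receive deadline $t_{j+1}$, and those from $\Phi_{j+1}$ receive deadline $t_{j+2}\ge t_{j+1}$, which is exactly what the restrictions on $c_\pi$ preceding \eqref{eq:negative constraint} guarantee. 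Hence every $t\in[t_j,t_{j+1}]$ is no later than the deadline of any such $\pi$, so Lemma \ref{lemma:CBF} is applicable on the whole active interval and gives $h_\pi(x,t)\ge 0\Rightarrow x(t)\in\llbracket\pi\rrbracket=\{x\mid\pi\in L(x)\}$, i.e. $L(x(t))\models\pi$. This is precisely the base case $m=1$.

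For the inductive step, write the sub-formula as $\phi'\bowtie\pi_m$ with $\bowtie\in\{\land,\lor\}$, where $\phi'=\pi_1\bowtie\cdots\bowtie\pi_{m-1}$ satisfies the induction hypothesis. The key observation is a direct application of Lemma \ref{lemma:min operator}: the right-hand side of \eqref{eq:min approximation} is a lower bound of $\min\{h_{\phi'}(x,t),h_{\pi_m}(x,t)\}$, and the right-hand side of \eqref{eq:max approximation} is a lower bound of $\max\{h_{\phi'}(x,t),h_{\pi_m}(x,t)\}$ (it is a convex combination of the two values). Therefore $h_\phi(x,t)\ge 0$ forces both $h_{\phi'}(x,t)\ge 0$ and $h_{\pi_m}(x,t)\ge 0$ when $\bowtie=\land$, and forces at least one of them to be nonnegative when $\bowtie=\lor$. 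Applying the induction hypothesis to $h_{\phi'}$ and the base-case argument (Lemma \ref{lemma:CBF}) to $h_{\pi_m}$ yields $L(x(t))\models\phi'\land\pi_m$ in the first case and $L(x(t))\models\phi'\lor\pi_m$ in the second; in both cases $L(x(t))\models\phi$, closing the induction.

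The step that needs the most care — a point to get right rather than a genuine obstacle — is the direction of the inequalities in Lemma \ref{lemma:min operator}: the log-sum-exp and weighted-softmax expressions \emph{under}-approximate $\min$ and $\max$, so it is the \emph{nonnegativity} of the composite CBF (and not its negativity) that propagates down to the component CBFs. This is also why the lemma only asserts the one-sided implication $h_\phi(x,t)\ge 0\Rightarrow L(x(t))\models\phi$ and not an equivalence. The only other thing to verify is the deadline bookkeeping recorded in the first paragraph, namely that Lemma \ref{lemma:CBF} indeed covers all of $[t_j,t_{j+1}]$ for every atomic proposition feeding into $\phi$; this follows from the coefficient-selection step of Algorithm \ref{alg:CBF design}.
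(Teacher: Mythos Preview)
Your proposal is correct and follows essentially the same inductive argument as the paper: both proofs induct on the syntactic structure of $\phi$, use Lemma~\ref{lemma:min operator} to push nonnegativity of the composite CBF down to its components, and then invoke Lemma~\ref{lemma:CBF} at the atomic level. Your version is in fact slightly tidier in two respects: you take the base case to be a single atomic proposition (the paper starts at two), and you make explicit the deadline bookkeeping---that every atom feeding into $\Phi_j$, $\phi_j$, or $\Phi_{j+1}$ carries a deadline of at least $t_{j+1}$, so Lemma~\ref{lemma:CBF} indeed covers all of $[t_j,t_{j+1}]$---a point the paper's proof uses but does not spell out.
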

\begin{proof}
    We prove by induction. Consider $\phi=\pi_1\land\pi_2$. Then $h_\phi$ is computed as \eqref{eq:min approximation}. By Lemma \ref{lemma:min operator}, $h_\phi(x,t)\geq0$ implies that $h_{\pi_1}(x,t)\geq0$ and $h_{\pi_2}(x,t)\geq0$. By Lemma \ref{lemma:CBF}, $x(t)\in\llbracket\pi_1\rrbracket\cap\llbracket\pi_2\rrbracket$, and thus $\phi$ is satisfied. Consider $\phi=\pi_1\lor\pi_2$. Then $h_\phi$ is computed as \eqref{eq:max approximation}. By Lemma \ref{lemma:min operator}, $h_\phi(x,t)\geq0$ implies that $h_{\pi_1}(x,t)\geq0$ or $h_{\pi_2}(x,t)\geq0$. By Lemma \ref{lemma:CBF}, $x(t)\in\llbracket\pi_1\rrbracket\cup\llbracket\pi_2\rrbracket$, and thus $\phi$ is satisfied. These two cases serve as our induction base.
    
    Suppose the lemma holds after applying \eqref{eq:min approximation} and \eqref{eq:max approximation} $k-1$ times, and denote the corresponding CBF as $h_\phi^{k-1}$. If the $k$-th operation is a conjunction with atomic proposition $\pi$, then $h_\phi^{k}(x,t)$ is obtained by \eqref{eq:min approximation}. By Lemma \ref{lemma:min operator}, $h_\phi^{k}(x,t)\geq0$ implies that $h_{\pi}(x,t)\geq0$ and $h_{\phi}^{k-1}(x,t)\geq0$. By Lemma \ref{lemma:CBF}, $h_{\pi}(x,t)\geq0$ implies $x(t)\in\llbracket\pi\rrbracket$. By our inductive hypothesis, $h_{\phi}^{k-1}(x,t)\geq0$ implies $L(x(t))\models\phi$ after applying \eqref{eq:min approximation} and \eqref{eq:max approximation} $k-1$ times. Combine the arguments above, we have $L(x(t))\models\phi$ when the $k$-th operator is a conjunction. If the $k$-th operation is a disjunction with atomic proposition $\pi$, then $h_\phi^{k}(x,t)$ is obtained by \eqref{eq:max approximation}. By Lemma \ref{lemma:min operator}, $h_\phi^{k}(x,t)\geq0$ implies that $h_{\pi}(x,t)\geq0$ or $h_{\phi}^{k-1}(x,t)\geq0$. Similar to the conjunction case, we can conclude $L(x(t))\models\phi$ when the $k$-th operator is a disjunction. Since $\phi$ is in the form of conjunction and disjunction of finite number of atomic propositions, we have that the lemma holds by induction.
\end{proof}

\subsection{CBF-based Controller Synthesis}

In this subsection, we present how to compute the controllers that satisfy each formula $\psi_j$ in the form of \eqref{eq:sub-formula}. 
We then show that by satisfying all formulae $\psi_j$ for all $j$, the LTL specification $\varphi$ is satisfied.

\begin{lemma}\label{lemma:compose atomic until}
    Consider a formula $\psi_j$ in the form of \eqref{eq:sub-formula} whose active time is $[t_j,t_{j+1}]$. Let $h_{\Phi_j}$ and $h_{\Omega_j}$ be the CBFs of $\Phi_j$ and $\phi_j\land\Phi_{j+1}$ obtained using Algorithm \ref{alg:CBF design}, respectively. Then for all $t\in[t_j,t_{j+1}]$ any feedback controller in 
    \begin{equation*}
    \mathcal{U}_{\psi_j}(x,t)=\begin{cases}
    \mathcal{U}_2(x,t),\quad\quad\quad\quad\mbox{ if }L(x(t))\models\phi_j\land\Phi_{j+1},\\
    \mathcal{U}_1(x,t)\cap\mathcal{U}_2(x,t),\mbox{ if }L(x(t))\models\Phi_j,\\
    \emptyset,\quad\quad\quad\quad\quad\quad\mbox{ otherwise}
    \end{cases}
    \end{equation*}
    satisfies $Trace\left(\mathbf{x}^\mu([t_j,t_{j+1}])\right)\models\psi_j$, where
    \begin{align*}
    &\mathcal{U}_1(x,t)=\Big\{\mu|\frac{\partial h_{\Phi_j\Omega_j}(x,t)}{\partial x}f(x)+\frac{\partial h_{\Phi_j\Omega_j}(x,t)}{\partial x}g(x)\mu(x,t)\\
    &\quad\quad\quad\quad+\frac{\partial h_{\Phi_j\Omega_j}(x,t)}{\partial t}+\alpha(h_{\Phi_j\Omega_j}(x,t))\geq 0\Big\},\\
    &\mathcal{U}_2(x,t)=\Big\{\mu|\frac{\partial h_{\Omega_j}(x,t)}{\partial x}f(x)+\frac{\partial h_{\Omega_j}(x,t)}{\partial x}g(x)\mu(x,t)\\
    &\quad+\frac{\partial h_{\Omega_j}(x,t)}{\partial t}+\gamma\cdot \text{sgn}(h_{\Omega_j}(x,t))|h_{\Omega_j}(x,t)|^\rho\geq 0\Big\}\\
    &h_{\Phi_j\Omega_j}(x,t)=\frac{h_{\Phi_j}(x,t)e^{\lambda h_{\Phi_j}(x,t)}+h_{\Omega_j}(x,t)e^{\lambda h_{\Omega_j}(x,t)}}{e^{\lambda h_{\Phi_j}(x,t)}+e^{\lambda h_{\Omega_j}(x,t)}}.
\end{align*}
\end{lemma}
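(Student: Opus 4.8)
The plan is to unpack the semantics of the formula $\psi_j = \Phi_j\,\mathbf{U}\,\Box(\phi_j\wedge\Phi_{j+1})$ and match each temporal piece to the forward-invariance/finite-time-convergence guarantees of Propositions \ref{prop:ZCBF} and \ref{prop:FCBF}, using Lemma \ref{lemma:compose CBF} to translate positivity of the composed CBFs back into satisfaction of the relevant atomic formulas. Concretely, $\psi_j$ requires that, along the trajectory on $[t_j,t_{j+1}]$, $\Phi_j$ holds until some time at which $\phi_j\wedge\Phi_{j+1}$ becomes true and thereafter remains true (the $\Box$). So I would argue by a case split on the current label $L(x(t))$, exactly mirroring the definition of $\mathcal{U}_{\psi_j}(x,t)$.

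First, the case $L(x(t))\models\phi_j\wedge\Phi_{j+1}$, i.e. $x(t)\in\mathcal{C}_{\Omega_j}(t)=\{x\mid h_{\Omega_j}(x,t)\geq 0\}$: here I invoke Proposition \ref{prop:FCBF} applied to the FCBF $h_{\Omega_j}$. Since the state is already in $\mathcal{C}_{\Omega_j}(t)$, any $\mu\in\mathcal{U}_2(x,t)=\mathcal{U}_F(x,t)$ renders $\mathcal{C}_{\Omega_j}(t')$ forward invariant for all $t'\geq t$, so $h_{\Omega_j}(x(t'),t')\geq 0$, and by Lemma \ref{lemma:compose CBF} we get $L(x(t'))\models\phi_j\wedge\Phi_{j+1}$ for all $t'\in[t,t_{j+1}]$ — this establishes the $\Box(\phi_j\wedge\Phi_{j+1})$ obligation. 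Second, the case $L(x(t))\models\Phi_j$ but the target not yet reached: I use Proposition \ref{prop:ZCBF} on the ZCBF $h_{\Phi_j\Omega_j}$ (the smooth max of $h_{\Phi_j}$ and $h_{\Omega_j}$, which lower-bounds their max and hence stays nonnegative while either disjunct holds) to keep the trajectory in $\{x\mid h_{\Phi_j\Omega_j}(x,t)\geq 0\}$, so $\Phi_j\vee(\phi_j\wedge\Phi_{j+1})$ continues to hold — this preserves the "until" obligation $\Phi_j$ as long as the target is not reached — while simultaneously the FCBF constraint $\mathcal{U}_2(x,t)$ drives $h_{\Omega_j}$ up so that within the finite convergence time $T=|h_{\Omega_j}(x(t_j),t_j)|^{1-\rho}/(\gamma(1-\rho))$ the trajectory enters $\mathcal{C}_{\Omega_j}$; I would impose (or note that the deadline spacing $t_{j+1}-t_j$ must be chosen so that) $T\leq t_{j+1}-t_j$, after which we fall into the first case. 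Chaining the two phases gives a switching time $t_j'\in[t_j,t_{j+1}]$ at which $\phi_j\wedge\Phi_{j+1}$ first holds, with $\Phi_j$ on $[t_j,t_j')$ and $\phi_j\wedge\Phi_{j+1}$ on $[t_j',t_{j+1}]$, which is exactly $Trace(\mathbf{x}^\mu([t_j,t_{j+1}]))\models\psi_j$ by Definition \ref{def:trace} and the semantics of $\mathbf{U}$ and $\Box$. The "otherwise" case contributes nothing since $\mathcal{U}_{\psi_j}=\emptyset$ there.

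The main obstacle I expect is bookkeeping at the boundary/switching instant: making sure the trace is well-defined in the sense of Definition \ref{def:trace} (items 4–5) at the moment $t_j'$ when the label changes from $\Phi_j$ to $\phi_j\wedge\Phi_{j+1}$ — in particular that $h_{\Phi_j\Omega_j}\geq 0$ is genuinely maintained through that instant and that the smooth-max underestimate does not spuriously go negative when exactly one of the two component CBFs is positive. I would handle this by noting that $h_{\Phi_j\Omega_j}(x,t)\geq\min\{h_{\Phi_j}(x,t),h_{\Omega_j}(x,t)\}$ is not what we want; rather we use $h_{\Phi_j\Omega_j}\ge$ a convex-combination-weighted value that is $\ge 0$ whenever $\max\{h_{\Phi_j},h_{\Omega_j}\}\ge 0$ by Lemma \ref{lemma:min operator}, so as long as either $\Phi_j$ or $\Omega_j$ is satisfied, $h_{\Phi_j\Omega_j}\ge 0$, and the ZCBF invariance keeps it there; continuity of $h_{\Omega_j}$ then gives $h_{\Omega_j}(x(t_j'),t_j')=0$, so $x(t_j')\in\mathcal{C}_{\Omega_j}(t_j')$, i.e. the label at the switching instant is on the $\phi_j\wedge\Phi_{j+1}$ side, satisfying item 5 of Definition \ref{def:trace}. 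A secondary technical point is that Proposition \ref{prop:FCBF} as stated measures convergence time from $t=0$; I would restate it with initial time $t_j$ (the argument is identical) so that the finite-time bound applies on the window $[t_j,t_{j+1}]$.
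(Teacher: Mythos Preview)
Your approach is essentially the paper's: the same case split on $L(x(t))$, the same invocation of Proposition \ref{prop:FCBF} on $h_{\Omega_j}$ for forward invariance once in the target, and the same combination of Proposition \ref{prop:ZCBF} on $h_{\Phi_j\Omega_j}$ with Proposition \ref{prop:FCBF} on $h_{\Omega_j}$ in the $\Phi_j$ phase, with Lemma \ref{lemma:compose CBF} translating CBF positivity into formula satisfaction and $\rho,\gamma$ tuned so that the convergence time fits inside $[t_j,t_{j+1}]$.

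One correction on your handling of the switching instant: you claim that $h_{\Phi_j\Omega_j}$ is ``$\ge 0$ whenever $\max\{h_{\Phi_j},h_{\Omega_j}\}\ge 0$ by Lemma \ref{lemma:min operator}.'' The inequality in Lemma \ref{lemma:min operator} runs the other way --- the smooth-max \emph{underestimates} the true max --- so $h_{\Phi_j\Omega_j}\ge 0\Rightarrow\max\{h_{\Phi_j},h_{\Omega_j}\}\ge 0$, not the converse; when one component is barely positive and the other is negative the smooth-max can itself be negative, and your argument that the ZCBF initial-condition hypothesis is met does not follow. You were right to flag this as the delicate point; the paper's own proof simply asserts forward invariance of $\{h_{\Phi_j\Omega_j}\ge 0\}$ without checking that the state lies in this set when $L(x(t))\models\Phi_j$, so the issue is shared, but your proposed resolution has the inequality backwards.
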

\begin{proof}
 
 By the semantics of until operator $\mathbf{U}$ and Definition \ref{def:trace}, it suffices to show that there exists a time sequence $t_j,T,t_{j+1}$ such that (i) $t_j<t_{j+1}$, (ii) $T\in[t_j, t_{j+1}]$, and (iii) $L(x(t))\models \Phi_j$ for all $t\in[t_j,T)$ and $L(x(t))\models \phi_j\land\Phi_{j+1}$ for all $t\in[T,t_{j+1}]$. We show that each of these conditions is satisfied.
 
 First, condition $t_j<t_{j+1}$ holds by the construction of deadlines as given in Section \ref{sec:CBF}. To guarantee $T\in[t_j,t_{j+1}]$, we can tune parameters $\rho$ and $\gamma$ as given in Proposition \ref{prop:FCBF} so that $T\leq t_{j+1}$. In the following, we show that there exists some time $T\in[t_j,t_{j+1}]$ such that $L(x(t))\models \Phi_j$ for all $t\in[t_j,T)$ and $L(x(t))\models \phi_j\land\Phi_{j+1}$ for all $t\in[T,t_{j+1}]$.
 
 We start with the case where $L(x(t))\models\phi_j\land\Phi_j$. In this case, $T=t$ and we need to guarantee $L(x(t'))\models \phi_j\land\Phi_{j+1}$ for all $t'\in[t,t_{j+1}]$ so that formula $\psi_j$ is satisfied. By Proposition \ref{prop:FCBF}, the set of controllers $\mathcal{U}_2(x,t)$ ensures that the system remains in the set $\{x|h_{\Omega_j}(x,t')\geq0\}$ for all $t'\geq t$. By Lemma \ref{lemma:compose CBF}, we have that $L(x(t'))\models\phi_j\land\Phi_j$ for all $x(t')\in\{x(t)|h_{\Omega_j}(x,t')\geq0\}$ for all $t'\in [t,t_{j+1}]$. Therefore, $Trace\left(\mathbf{x}^\mu([t,t_{j+1}])\right)\models\psi_j$ for all $t\in[t_j,t_{j+1}]$ in this case.
 
 We then consider the case where $L(x(t))\models\Phi_j$. By Proposition \ref{prop:ZCBF}, the set of controllers $\mathcal{U}_1(x,t)$ guarantees that the system remains in the set $\{x(t')|h_{\Phi_j\Omega_j}(x,t')\geq0\}$ for all $t'\in[t,t_{t+1}]$. By Lemma \ref{lemma:compose CBF}, we have that $L(x(t'))\models\Phi_j$, or $L(x(t'))\models\phi_j\land\Phi_{j+1}$, or both, for all $x(t')\in\{x(t')|h_{\Phi_j\Omega_j}(x,t')\geq0\}$ where $t'\in[t,t_{j+1}]$. By Proposition \ref{prop:FCBF}, the set of controllers $\mathcal{U}_2(x,t)$ ensures that the system will be steered to the set $\{x|h_{\Omega_j}(x,t)\geq0\}$ at some time $T\geq t$ if $x(t)\notin \{x|h_{\Omega_j}(x,t)\geq0\}$. Moreover, the system remains in $x(t')\in\{x|h_{\Omega_j}(x,t')\geq0\}$ for all $t'\in[T,t_{j+1}]$. By Lemma \ref{lemma:compose CBF}, we have that $Trace\left(\mathbf{x}^\mu([t,t_{j+1}])\right)\models\Box( \phi_j\land\Phi_{j+1})$ for all $t\in[t_j,t_{j+1}]$. Thus, the controllers in $\mathcal{U}_1(x,t)\cap\mathcal{U}_2(x,t)$ guarantee $Trace\left(\mathbf{x}^\mu([t,t_{j+1}])\right)\models\psi_j$ for all $t\in[t_j,t_{j+1}]$.
 
 Combining the arguments above, we have that the controllers in $\mathcal{U}_{\psi_j}(x,t)$ satisfy $Trace\left(\mathbf{x}^\mu([t_j,t_{j+1}])\right)\models\psi_j$.
\end{proof}

We note that the computation of controllers can be simplified when the formula $\psi_j$ is in some simple forms. Consider a formula $\psi_j$ as per \eqref{eq:sub-formula}. When $\Phi_j=\top$, then $\psi_j=\Box(\phi_j\land\Phi_{j+1})$. The controllers satisfying $\psi_j$ in this case are given by the following corollary.
\begin{corollary}\label{prop:compose atomic box}
    Suppose $\psi_j=\Box\phi$. Let $h_{\phi}$ be the CBF of $\phi$ that is obtained using Algorithm \ref{alg:CBF design}. Then any feedback controller in
    \begin{equation*}
    \mathcal{U}_{\psi_j}(x,t)=\begin{cases}
    \big\{\mu|\frac{\partial h_{\phi}(x,t)}{\partial x}f(x)+\frac{\partial h_{\phi}(x,t)}{\partial x}g(x)\mu(x,t)+\\
    \frac{\partial h_{\phi}(x,t)}{\partial t}+\alpha(h_{\phi}(x,t))\geq 0\big\},\mbox{ if } L(x_0)\models\phi\\
    \emptyset,\mbox{ otherwise}
    \end{cases}
    \end{equation*}
    satisfies $\psi_j$.
\end{corollary}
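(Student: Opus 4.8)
The plan is to read $\psi_j=\Box\phi$ as the degenerate instance of \eqref{eq:sub-formula} in which $\Phi_j=\top$: the until-prefix $\Phi_j\,\mathbf{U}\,(\cdot)$ is then satisfied vacuously, so the ``reach'' phase of Lemma~\ref{lemma:compose atomic until} collapses and only the ``stay'' phase remains. Concretely it suffices to keep the closed-loop state inside $\mathcal{C}(t)=\{x\mid h_\phi(x,t)\geq 0\}$ for every $t$ in the active window $[t_j,t_{j+1}]$, and the controller set written in the corollary is exactly $\mathcal{U}_Z(x,t)$ for the ZCBF $h_\phi$. The whole argument therefore reduces to two ingredients: (i) the closed-loop state starts inside $\mathcal{C}$, i.e. $x_0\in\mathcal{C}(0)$, and (ii) forward invariance of $\mathcal{C}(t)$ propagates this to all later time.

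For (i) I would argue that $L(x_0)\models\phi$ implies $h_\phi(x_0,0)\geq 0$. This is the point at which the CBF design enters: the initialization constraints \eqref{eq:coefficient} force $h_\pi(x_0,0)\geq 0$ for each atomic proposition that holds at $x_0$, and one then propagates nonnegativity up through the recursive composition of Definition~\ref{def:CBF construction} to the composite CBF $h_\phi$ (I return to this step below). Granting $x_0\in\mathcal{C}(0)$, step (ii) is immediate: since $h_\phi$ is a ZCBF and $\mu(x,t)\in\mathcal{U}_Z(x,t)$, Proposition~\ref{prop:ZCBF} renders $\mathcal{C}(t)$ forward invariant, so $h_\phi(x(t),t)\geq 0$ for all $t\geq 0$ and in particular for all $t\in[t_j,t_{j+1}]$. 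Lemma~\ref{lemma:compose CBF} then upgrades this to $L(x(t))\models\phi$ for every $t\in[t_j,t_{j+1}]$, which by the semantics of the always operator is precisely $Trace(\mathbf{x}^\mu([t_j,t_{j+1}]))\models\Box\phi=\psi_j$.

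The ``otherwise'' branch is dealt with separately: if $L(x_0)\not\models\phi$ then $x_0\notin\llbracket\phi\rrbracket$, so the first instant at which $\Box\phi$ is required to hold is already violated; a ZCBF-type feedback law can only keep a set forward invariant and cannot steer the state into it, so there is no admissible controller and the set is declared empty (consistently with the construction, \eqref{eq:initial constraint 2} then forces $h_\phi(x_0,0)<0$, so forward invariance of $\mathcal{C}(t)$ is vacuous). If one wished this set to be nonempty one would replace the ZCBF by an FCBF, exactly as in Lemma~\ref{lemma:compose atomic until}.

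The step I expect to require the most care is the nonnegativity propagation in (i). The disjunction rule \eqref{eq:max approximation} is nonnegative as soon as one of its arguments is, but the log-sum-exp under-approximation of $\min$ in \eqref{eq:min approximation} is not tight, so $h_{\phi'}(x_0,0)\geq 0$ and $h_{\pi_k}(x_0,0)\geq 0$ do not by themselves force $h_\phi(x_0,0)\geq 0$; one needs a margin in the atomic CBFs at $(x_0,0)$, i.e. $x_0$ lying strictly inside each required $\llbracket\pi\rrbracket$, or a modest tightening of \eqref{eq:initial constraint 1} by a constant that depends on the number of conjuncts. With that observation in place, everything else is a direct instantiation of Proposition~\ref{prop:ZCBF}, Lemma~\ref{lemma:compose CBF}, and Definition~\ref{def:trace} --- the ``stay''-only specialization of the proof of Lemma~\ref{lemma:compose atomic until}.
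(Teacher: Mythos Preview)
Your approach is essentially the paper's: the paper's entire proof is the one line ``replace $\Phi_j$ in Lemma~\ref{lemma:compose atomic until} with unconditionally true $\top$,'' which is exactly your opening reduction. You then unpack what that specialization actually entails (initial membership $x_0\in\mathcal{C}(0)$, forward invariance via Proposition~\ref{prop:ZCBF}, and the semantic conclusion via Lemma~\ref{lemma:compose CBF}), whereas the paper leaves all of this implicit.

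Your flagged concern in step~(i) is a genuine technical point that the paper does not address: the log--sum--exp under-approximation \eqref{eq:min approximation} can be strictly negative even when all arguments are nonnegative, so $L(x_0)\models\phi$ does not literally force $h_\phi(x_0,0)\geq 0$ without the margin you describe. The paper's one-line proof sidesteps this entirely, and indeed the difficulty does not arise if one follows the substitution more literally: in Lemma~\ref{lemma:compose atomic until} the relevant branch uses the FCBF set $\mathcal{U}_2$, which does not require $h_{\Omega_j}(x_0,0)\geq 0$ to begin with. The corollary, however, replaces this by a ZCBF constraint, and that replacement is precisely where your observation bites. So your expanded argument is faithful to the paper's intent but more careful about what the specialization actually buys; the subtlety you raise is real and is glossed over in the original.
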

\begin{proof}
    The proof follows by replacing $\Phi_j$ in Lemma \ref{lemma:compose atomic until} with unconditionally true $\top$.
\end{proof}

We finally show that by satisfying each formula $\psi_j$, specification $\varphi$ is satisfied.

\begin{theorem}\label{thm:correctness}
Applying the controllers in $\mathcal{U}_{\psi_j}(x,t)$ for all $t\geq0$ as given in Lemma \ref{lemma:compose atomic until} renders $Trace\left(\mathbf{x}^\mu\right)\models\varphi$.
\end{theorem}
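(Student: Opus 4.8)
The plan is to chain together the results that have already been assembled. The heart of the argument is Corollary~\ref{coro:run}, which states that if $Trace\left(\mathbf{x}^\mu([t_j,t_{j+1}])\right)\models\psi_j$ for every $j=0,1,\ldots$, then $Trace\left(\mathbf{x}^\mu\right)\models\varphi$. So it suffices to show that applying the controllers in $\mathcal{U}_{\psi_j}(x,t)$ on each active interval $[t_j,t_{j+1}]$ forces $Trace\left(\mathbf{x}^\mu([t_j,t_{j+1}])\right)\models\psi_j$ for all $j$. Lemma~\ref{lemma:compose atomic until} gives exactly this implication for a single $j$, \emph{provided} the hypotheses of that lemma hold at the start of the interval --- namely, that the system state at time $t_j$ satisfies either $\Phi_j$ or $\phi_j\wedge\Phi_{j+1}$ (so that $\mathcal{U}_{\psi_j}$ is nonempty and the controller is well-defined). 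The proof is therefore an induction on $j$.

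First I would set up the induction. The base case $j=0$: by construction (see the choice of coefficients in Section~\ref{sec:CBF}, inequalities \eqref{eq:coefficient}), the guard functions are calibrated so that $h_\pi(x_0,0)\ge 0$ exactly when $\pi\in L(x_0)$, and the accepting run $\eta$ was chosen starting from $q_0$, so $L(x_0)\models\Phi_0$ (the self-loop at $q_0$ holds initially). Hence $x_0\in\llbracket\Phi_0\rrbracket$ and $\mathcal{U}_{\psi_0}(x_0,0)$ is the nonempty set $\mathcal{U}_1\cap\mathcal{U}_2$; applying a controller from it on $[t_0,t_1]$ yields $Trace\left(\mathbf{x}^\mu([t_0,t_1])\right)\models\psi_0$ by Lemma~\ref{lemma:compose atomic until}. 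For the inductive step, assume $Trace\left(\mathbf{x}^\mu([t_{j-1},t_j])\right)\models\psi_{j-1}$. By Lemma~\ref{lemma:transition}, this drives the DRA from $q_{j-1}$ to $q_j$ during $[t_{j-1},t_j]$ and keeps it in $q_j$ until at least $t_j$; moreover, from $\psi_{j-1}=\Phi_{j-1}\,\mathbf{U}\,\Box(\phi_{j-1}\wedge\Phi_j)$ and the semantics of $\mathbf{U}$ and $\Box$, the trajectory satisfies $\phi_{j-1}\wedge\Phi_j$ on a final sub-interval ending at $t_j$, so in particular $L(x(t_j))\models\Phi_j$. Thus $x(t_j)\in\llbracket\Phi_j\rrbracket$, the hypothesis of Lemma~\ref{lemma:compose atomic until} for index $j$ is met, $\mathcal{U}_{\psi_j}(x(t_j),t_j)\neq\emptyset$, and applying a controller from it on $[t_j,t_{j+1}]$ gives $Trace\left(\mathbf{x}^\mu([t_j,t_{j+1}])\right)\models\psi_j$. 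This closes the induction over the prefix indices $0,\ldots,J$ and, by periodicity of the suffix (the deadlines of the suffix transitions being $n\Delta+t_{J+1}$ with the same formulae $\psi_{J+1},\ldots,\psi_{J+N}$ repeating), over all $j\ge 0$.

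Having established $Trace\left(\mathbf{x}^\mu([t_j,t_{j+1}])\right)\models\psi_j$ for all $j$, invoking Corollary~\ref{coro:run} completes the proof: $Trace\left(\mathbf{x}^\mu\right)\models\varphi$. I would also remark that correctness is conditional on feasibility --- the quadratic program realizing membership in $\mathcal{U}_{\psi_j}(x,t)$ must be feasible at each time step, which is why the guard functions were introduced (they relax the constraints so that the active-time windows do not conflict).

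The main obstacle is the hand-off between consecutive intervals: one must verify that satisfying $\psi_{j-1}$ leaves the state at $t_j$ \emph{inside the region where the next controller set is defined}. This is not automatic from Lemma~\ref{lemma:compose atomic until} alone; it requires reading off from the $\mathbf{U}$/$\Box$ semantics that the ``$\Box(\phi_{j-1}\wedge\Phi_j)$'' part of $\psi_{j-1}$ has already kicked in by time $t_j$, which forces $L(x(t_j))\models\Phi_j$ (recall $\Phi_j$ is the self-loop word at $q_j$, and the transition word $\phi_{j-1}$ into $q_j$ is consistent with $\Phi_j$). A secondary subtlety is the timing: one needs the finite-time convergence bound $T=\frac{|h(x_0,0)|^{(1-\rho)}}{\gamma(1-\rho)}$ from Proposition~\ref{prop:FCBF} to fall within each window $[t_j,t_{j+1}]$, which is controlled by choosing $\gamma$ and $\rho$ appropriately --- something that should be stated as a standing assumption on the parameter selection rather than proved.
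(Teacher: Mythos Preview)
Your proposal is correct and follows the same route as the paper: invoke Lemma~\ref{lemma:compose atomic until} on each interval to get $Trace(\mathbf{x}^\mu([t_j,t_{j+1}]))\models\psi_j$, then apply Corollary~\ref{coro:run} (together with Lemma~\ref{lemma:transition}) to conclude $Trace(\mathbf{x}^\mu)\models\varphi$. The paper's proof is only three sentences and does exactly this chaining; your version is more detailed in that you explicitly run an induction to verify the hand-off condition $L(x(t_j))\models\Phi_j$ at the start of each interval, which the paper simply takes for granted.
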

\begin{proof}
     Applying $\mu$ for each time interval $[t_j,t_{j+1}]$, we have that $Trace\left(\mathbf{x}^\mu([t_j,t_{j+1}])\right)\models\psi_j$ due to Lemma \ref{lemma:compose atomic until}. Then according to Lemma \ref{lemma:transition} and Corollary \ref{coro:run}, we have that satisfying the sequence of formulae $\psi_j$ for all $j$ is equivalent to executing the run $\eta$ on the DRA. Since $\eta$ is an accepting run, we can conclude that $Trace\left(\mathbf{x}^\mu\right)\models\varphi$.
\end{proof}

\begin{figure*}[t!]
\centering
                 \begin{subfigure}{.3\textwidth}
                 \includegraphics[width=\columnwidth]{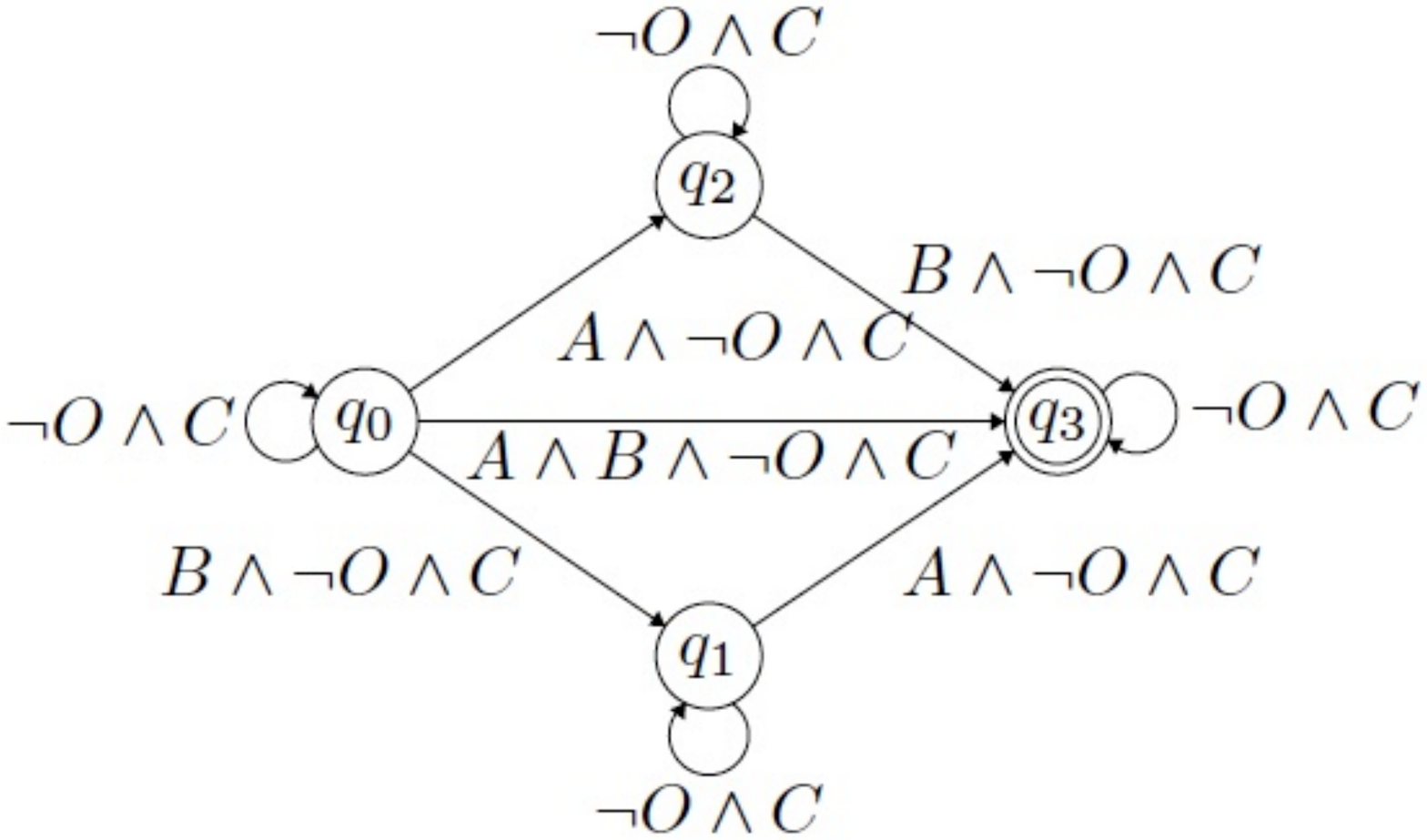}
                 \subcaption {}
                 \label{fig:auto}
                 \end{subfigure}
                 \begin{subfigure}{.3\textwidth}
                 \includegraphics[width=\columnwidth]{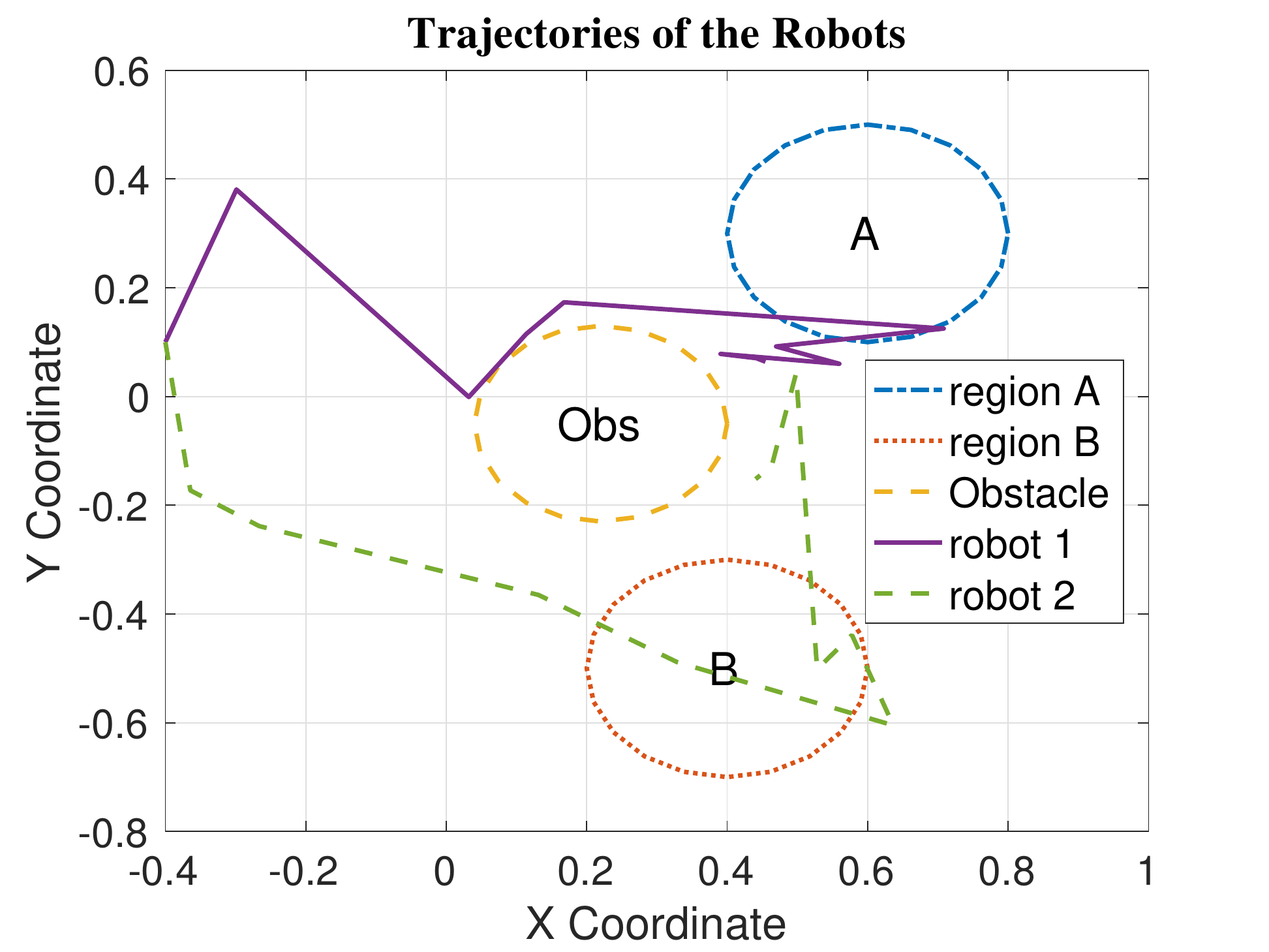}
                 \subcaption {}
                 \label{fig:traj}
                 \end{subfigure}
                 \begin{subfigure}{.3\textwidth}
                 \includegraphics[width=\columnwidth]{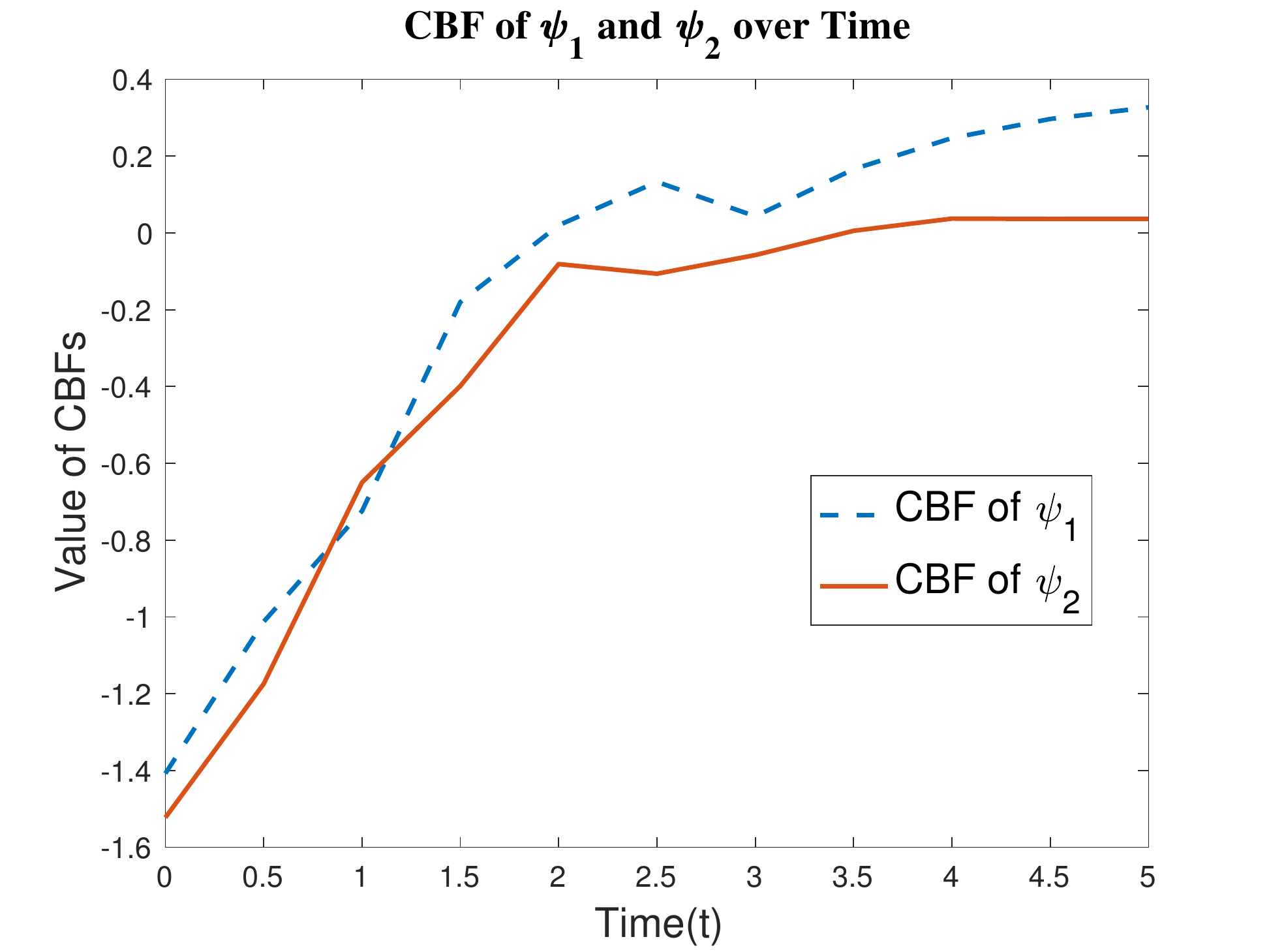}
                 \subcaption {}
                 \label{fig:CBF}
                 \end{subfigure}
\caption{Fig. \ref{fig:auto} shows the DRA associated with $\varphi=\Diamond A\land \Diamond B\land \Box(\neg O\land C)$. It has one Rabin pair $(\emptyset,\{q_3\})$. Fig. \ref{fig:traj} presents the trajectories of both robot using the proposed approach in this paper. The trajectory of the first robot is plotted in solid line, while the trajectory of the second robot is plotted in dotted line. The first robot eventually reaches region $A$, and the second robot eventually reaches region $B$. Both robots successfully avoid the obstacle region $O$ and remain close enough to each other. CBFs $h_{\psi_1}$ and $h_{\psi_2}$ are presented in Fig. \ref{fig:CBF}. $h_{\psi_1}$ is plotted in dotted line and $h_{\psi_2}$ is plotted in solid line. Both CBFs become positive at some time $t\geq0$, at which the formulae corresponding to the CBFs are satisfied.}
\end{figure*}

Using the treatment in \cite{ames2014control,ames2016control}, we can compute the control input $u(t)$ at each time $t$ by solving the quadratic program (QP): 
\begin{align*}
    \min_u \quad&u^TPu\\
    \st \quad&u(t)\in\mathcal{U}_{\psi_j}(x,t),~\forall \psi_j
\end{align*}
where $P\in\mathbb{R}^{m\times m}$ is a positive semi-definite matrix that quantifies the cost of control, and $\mathcal{U}_{\psi_j}(x,t)$ is given by Lemma \ref{lemma:compose atomic until}. The constraint set of this QP requires that the CBF constraints as given in Lemma \ref{lemma:compose atomic until} must be satisfied for each formula $\psi_j$.
We note that the deadlines of formulae $\psi_j$'s corresponding to transitions in $\eta_{suff}$ are generated periodically as $n\Delta+t_{J+1}$. We can impose the constraint $u(t) \in \mathcal{U}_{\psi_{i}}(x,t)$ for $i \in \{j,\ldots,j+J+N\}$, resulting in a finite number of constraints in the QP. The controllers for future time can then be generated by implementing the controllers periodically.
\section{Simulation}\label{sec:simulation}

In this section, we present a numerical case study on a two robot homogeneous multi-agent system. Consider a multi-agent system consisting of two robots whose dynamics are given as $\dot{x}=u$, where $x\in\mathbb{R}^4$ is the system state, and $u\in\mathbb{R}^4$ is the input. The state variables $x_1$ and $x_2$ give the coordinate of the first robot, and state variables $x_3$ and $x_4$ give the coordinate of the second robot. The initial positions for both robots are $[-0.4,0.1]$.

Each of the robots has its respective goal region, denoted as region $A$ and region $B$. Both robots are required to eventually reach their goal regions. In the meantime, both robots need to avoid the obstacle region, denoted as $O$. Furthermore, the robots keep exchanging information with each other, and hence must remain close enough. The tasks can be represented by an LTL formula $\varphi=\Diamond A\land \Diamond B\land \Box(\neg O\land C)$, where $C$ represents the connectivity specification. In this case study, region $A$ is modeled as $\{x|Z_{A}(x)\geq 0\}$ where $Z_A(x)=0.2 - \|[x_1,x_2] - [0.6,0.3]\|_2$. Region $B$ is modeled as $\{x|Z_B(x)\geq 0\}$ where $Z_B(x)=0.2- \|[x_3,x_4] - [0.4,-0.5]\|_2$. The obstacle region is modeled as $\{x|Z_{O}(x)\geq 0\}$ where $Z_{O}(x)=0.18 - \|x-[0.22,-0.05,0.22,-0.05]\|_2$. The connectivity between the two robots is given as $C=\{x|Z_{C}(x)\geq0\}$ where $Z_{C}(x)=\sqrt{x_3+0.39}-\|[x_1,x_2]-[x_3,x_4]\|_2$. 

We compare the proposed approach with the approach proposed in \cite{srinivasan2018control}. Since there is single CBF for each region of interest, the relaxation proposed in \cite{srinivasan2018control} coincides with the traditional CBF-based approach. We observe that no feasible trajectory is synthesized since there is no feasible solution to the QP. The infeasibility of the QP is caused by the fact that the connectivity constraint requires the robots to stay close to each other, while steering the robots into region $A$ and $B$ makes them violate the connectivity constraint. This infeasibility agrees with the example presented in \cite{srinivasan2019control}. 

In the following, we demonstrate our proposed approach. Given the LTL specification $\varphi$, the DRA representing $\varphi$ is shown in Fig. \ref{fig:auto}. The accepting runs of the DRA include $q_0q_1(q_3)^\omega$, $q_0(q_3)^\omega$, and $q_0q_2(q_3)^\omega$. We pick the run $\eta=q_0q_1(q_3)^\omega$ in this case study. Transition from $q_0$ to $q_1$ of the accepting run $\eta$ corresponds to formula $\psi_0=(\neg O\land C)\mathbf{U}\Box(B\land\neg O\land C)$. Transition from $q_1$ to $q_2$ corresponds to formula $\psi_1=(\neg O\land C)\mathbf{U}\Box(A\land\neg O\land C)$. Self transition at $q_3$ corresponds to formula $\psi_3=\Box(\neg O\land C)$. Next, we assign the active time for each formulae during which the formula needs to be satisfied. In this case study, we let $\psi_1$ be satisfied during $[0,2]$, and let $\psi_2$ be satisfied during $[2,4]$. Using \eqref{eq:coefficient}, we then construct the CBFs for atomic propositions $B$ and $A$ as $h_B(x,t)=\frac{1}{1+e^{-(t-1.5)}}-0.63+0.2- \|[x_1,x_2] - [0.4,-0.5]\|_2$, $h_A(x,t)=\frac{1}{1+e^{-(t-0.5)}}-0.9+0.2-\|[x_1,x_2] - [0.6,0.3]\|_2$, respectively. Then the CBF for each formula can be constructed by Definition \ref{def:CBF construction}. For instance, $h_{\psi_3}(x,t) = -\ln(\exp(-Z_{O}(x))+\exp(-Z_{C}(x)))$.

In the following, we formulate the QP to solve for the controllers so that the LTL specification $\varphi$ is satisfied. According to Lemma \ref{lemma:compose atomic until}, we have the following constraints for each formula $\psi_j$:
\begin{align}
    &\frac{\partial h_{\Phi_j\Omega_j}(x,t)}{\partial x}f(x)+\frac{\partial h_{\Phi_j\Omega_j}(x,t)}{\partial x}g(x)\mu(x,t)\nonumber\\
    &\quad\quad\quad\quad+\frac{\partial h_{\Phi_j\Omega_j}(x,t)}{\partial t}+\alpha(h_{\Phi_j\Omega_j}(x,t))\geq 0\label{eq:constraint1}\\
    &\frac{\partial h_{\Omega_j}(x,t)}{\partial x}f(x)+\frac{\partial h_{\Omega_j}(x,t)}{\partial x}g(x)\mu(x,t)+\frac{\partial h_{\Omega_j}(x,t)}{\partial t}\nonumber\\
    &\quad\quad\quad\quad+\gamma\cdot \text{sgn}(h_{\Omega_j}(x,t))|h_{\Omega_j}(x,t)|^\rho\geq 0\label{eq:constraint2}
\end{align}
where $\Phi_j$ and $\Omega_j$ are defined as given in Lemma \ref{lemma:compose atomic until}, and
\begin{equation*}
    h_{\Phi_j\Omega_j}(x,t)=\frac{h_{\Phi_j}(x,t)e^{\lambda h_{\Phi_j}(x,t)}+h_{\Omega_j}(x,t)e^{\lambda h_{\Omega_j}(x,t)}}{e^{\lambda h_{\Phi_j}(x,t)}+e^{\lambda h_{\Omega_j}(x,t)}}.
\end{equation*}
Formulating \eqref{eq:constraint1} and \eqref{eq:constraint2} for all $\psi_j$'s form the constraint set of the QP.

By solving the QP, we obtain the controllers for both robots. The trajectories of the robots are presented in Fig. \ref{fig:traj}. We make the following observations. First, both robots eventually reach their target regions, while avoiding the obstacle region. Furthermore, the second robot reaches region $B$ before the first robot reaches region $A$. This can be observed from Fig. \ref{fig:CBF}. CBF $h_{\psi_1}$ turns positive before $h_{\psi_2}$. Second, after the robots reaching their goal regions, they can still leave the goal region rather than remaining in the goal region. This is because our design of CBF adopts the guarding function $M_\psi(t)$ which increases over time and hence enhances the feasibility of the QP. The relaxation introduced by the guarding function enables the satisfaction of the connectivity constraint. As we could observe in Fig. \ref{fig:CBF}, CBFs $h_{\psi_1}$ and $h_{\psi_2}$ remain positive after the robots reaching their goal regions.

\section{Conclusion}\label{sec:conclusion}

In this paper, we studied the problem of control synthesis for CPSs under LTL constraints modeled by LTL without next operator. We focused on synthesizing the controller to satisfy the LTL constraint without explicitly computing the abstraction of the CPS. A CBF-based approach is used in this paper. We first constructed a sequence of LTL formulae corresponding to an accepting run on the DRA, and presented a design rule to design time-varying CBFs for the sequence of formulae. We introduced a function named guard function when designing CBFs, which enhances feasibility of CBF constraints. We showed that the positivity of the CBF implies the satisfaction of the LTL formula. Then we showed how to satisfy the set of formulae by guaranteeing the positivities of their CBFs. We formulated a QP to compute a controller that satisfies the LTL specification. A numerical case study is presented to illustrate the proposed approach. In future work, we aim to jointly consider the selection of accepting run and CBF-based control synthesis.


\bibliographystyle{IEEEtran}
\bibliography{IEEEabrv,MyBib}

\begin{thebibliography}{10}
\providecommand{\url}[1]{#1}
\csname url@rmstyle\endcsname
\providecommand{\newblock}{\relax}
\providecommand{\bibinfo}[2]{#2}
\providecommand\BIBentrySTDinterwordspacing{\spaceskip=0pt\relax}
\providecommand\BIBentryALTinterwordstretchfactor{4}
\providecommand\BIBentryALTinterwordspacing{\spaceskip=\fontdimen2\font plus
\BIBentryALTinterwordstretchfactor\fontdimen3\font minus
  \fontdimen4\font\relax}
\providecommand\BIBforeignlanguage[2]{{%
\expandafter\ifx\csname l@#1\endcsname\relax
\typeout{** WARNING: IEEEtran.bst: No hyphenation pattern has been}%
\typeout{** loaded for the language `#1'. Using the pattern for}%
\typeout{** the default language instead.}%
\else
\language=\csname l@#1\endcsname
\fi
#2}}

\bibitem{baier2008principles}
C.~Baier, J.-P. Katoen, and K.~G. Larsen, \emph{{Principles of Model
  Checking}}.\hskip 1em plus 0.5em minus 0.4em\relax MIT Press, 2008.

\bibitem{kress2009temporal}
H.~Kress-Gazit, G.~E. Fainekos, and G.~J. Pappas, ``Temporal-logic-based
  reactive mission and motion planning,'' \emph{Transactions on Robotics},
  vol.~25, no.~6, pp. 1370--1381, 2009.

\bibitem{coogan2015traffic}
S.~Coogan, E.~A. Gol, M.~Arcak, and C.~Belta, ``Traffic network control from
  temporal logic specifications,'' \emph{IEEE Transactions on Control of
  Network Systems}, vol.~3, no.~2, pp. 162--172, 2015.

\bibitem{alur2000discrete}
R.~Alur, T.~A. Henzinger, G.~Lafferriere, and G.~J. Pappas, ``Discrete
  abstractions of hybrid systems,'' \emph{Proceedings of the IEEE}, vol.~88,
  no.~7, pp. 971--984, 2000.

\bibitem{fainekos2009temporal}
G.~E. Fainekos, A.~Girard, H.~Kress-Gazit, and G.~J. Pappas, ``Temporal logic
  motion planning for dynamic robots,'' \emph{Automatica}, vol.~45, no.~2, pp.
  343--352, 2009.

\bibitem{ding2014optimal}
X.~Ding, S.~L. Smith, C.~Belta, and D.~Rus, ``Optimal control of markov
  decision processes with linear temporal logic constraints,''
  \emph{Transactions on Automatic Control}, vol.~59, no.~5, pp. 1244--1257,
  2014.

\bibitem{wongpiromsarn2009receding}
T.~Wongpiromsarn, U.~Topcu, and R.~M. Murray, ``Receding horizon temporal logic
  planning for dynamical systems,'' in \emph{the Proc. of Intl. Conf. on
  Decision and Control (CDC)}.\hskip 1em plus 0.5em minus 0.4em\relax IEEE,
  2009, pp. 5997--6004.

\bibitem{niu2019optimal}
L.~Niu and A.~Clark, ``Optimal secure control with linear temporal logic
  constraints,'' \emph{IEEE Transactions on Automatic Control}, 2019.

\bibitem{wolff2014optimization}
E.~M. Wolff, U.~Topcu, and R.~M. Murray, ``Optimization-based trajectory
  generation with linear temporal logic specifications,'' in \emph{2014 IEEE
  International Conference on Robotics and Automation (ICRA)}.\hskip 1em plus
  0.5em minus 0.4em\relax IEEE, 2014, pp. 5319--5325.

\bibitem{horowitz2014compositional}
M.~B. Horowitz, E.~M. Wolff, and R.~M. Murray, ``A compositional approach to
  stochastic optimal control with co-safe temporal logic specifications,'' in
  \emph{2014 IEEE/RSJ International Conference on Intelligent Robots and
  Systems}.\hskip 1em plus 0.5em minus 0.4em\relax IEEE, 2014, pp. 1466--1473.

\bibitem{papusha2016automata}
I.~Papusha, J.~Fu, U.~Topcu, and R.~M. Murray, ``Automata theory meets
  approximate dynamic programming: Optimal control with temporal logic
  constraints,'' in \emph{2016 IEEE 55th Conference on Decision and Control
  (CDC)}.\hskip 1em plus 0.5em minus 0.4em\relax IEEE, 2016, pp. 434--440.

\bibitem{kappen2005linear}
H.~J. Kappen, ``Linear theory for control of nonlinear stochastic systems,''
  \emph{Physical review letters}, vol.~95, no.~20, p. 200201, 2005.

\bibitem{wieland2007constructive}
P.~Wieland and F.~Allg{\"o}wer, ``Constructive safety using control barrier
  functions,'' \emph{IFAC Proceedings Volumes}, vol.~40, no.~12, pp. 462--467,
  2007.

\bibitem{wang2017safety}
L.~Wang, A.~D. Ames, and M.~Egerstedt, ``Safety barrier certificates for
  collisions-free multirobot systems,'' \emph{IEEE Transactions on Robotics},
  vol.~33, no.~3, pp. 661--674, 2017.

\bibitem{ames2016control}
A.~D. Ames, X.~Xu, J.~W. Grizzle, and P.~Tabuada, ``Control barrier function
  based quadratic programs for safety critical systems,'' \emph{IEEE
  Transactions on Automatic Control}, vol.~62, no.~8, pp. 3861--3876, 2016.

\bibitem{srinivasan2019control}
M.~Srinivasan and S.~Coogan, ``Control of mobile robots using barrier functions
  under temporal logic specifications,'' \emph{arXiv preprint
  arXiv:1908.04903}, 2019.

\bibitem{srinivasan2018control}
M.~Srinivasan, S.~Coogan, and M.~Egerstedt, ``Control of multi-agent systems
  with finite time control barrier certificates and temporal logic,'' in
  \emph{2018 IEEE Conference on Decision and Control (CDC)}.\hskip 1em plus
  0.5em minus 0.4em\relax IEEE, 2018, pp. 1991--1996.

\bibitem{lindemann2018control}
L.~Lindemann and D.~V. Dimarogonas, ``Control barrier functions for signal
  temporal logic tasks,'' \emph{IEEE control systems letters}, vol.~3, no.~1,
  pp. 96--101, 2018.

\bibitem{yang2019continuous}
G.~Yang, R.~Tron, and C.~Belta, ``Continuous-time signal temporal logic
  planning with control barrier function,'' \emph{arXiv preprint
  arXiv:1903.03860}, 2019.

\bibitem{boyd2004convex}
S.~Boyd and L.~Vandenberghe, \emph{Convex optimization}.\hskip 1em plus 0.5em
  minus 0.4em\relax Cambridge university press, 2004.

\bibitem{wongpiromsarn2015automata}
T.~Wongpiromsarn, U.~Topcu, and A.~Lamperski, ``Automata theory meets barrier
  certificates: Temporal logic verification of nonlinear systems,'' \emph{IEEE
  Transactions on Automatic Control}, vol.~61, no.~11, pp. 3344--3355, 2015.

\bibitem{xu2018constrained}
X.~Xu, ``Constrained control of input--output linearizable systems using
  control sharing barrier functions,'' \emph{Automatica}, vol.~87, pp.
  195--201, 2018.

\bibitem{li2018formally}
A.~Li, L.~Wang, P.~Pierpaoli, and M.~Egerstedt, ``Formally correct composition
  of coordinated behaviors using control barrier certificates,'' in \emph{2018
  IEEE/RSJ International Conference on Intelligent Robots and Systems
  (IROS)}.\hskip 1em plus 0.5em minus 0.4em\relax IEEE, 2018, pp. 3723--3729.

\bibitem{haddad2008finite}
W.~M. Haddad, S.~G. Nersesov, and L.~Du, ``Finite-time stability for
  time-varying nonlinear dynamical systems,'' in \emph{2008 American control
  conference}.\hskip 1em plus 0.5em minus 0.4em\relax IEEE, 2008, pp.
  4135--4139.

\bibitem{babiak2013effective}
T.~Babiak, F.~Blahoudek, M.~K{\v{r}}et{\'\i}nsk{\`y}, and J.~Strej{\v{c}}ek,
  ``Effective translation of ltl to deterministic rabin automata: Beyond the
  ({F}, {G})-fragment,'' in \emph{Automated Technology for Verification and
  Analysis}.\hskip 1em plus 0.5em minus 0.4em\relax Springer, 2013, pp. 24--39.

\bibitem{ames2014control}
A.~D. Ames, J.~W. Grizzle, and P.~Tabuada, ``Control barrier function based
  quadratic programs with application to adaptive cruise control,'' in
  \emph{53rd IEEE Conference on Decision and Control}.\hskip 1em plus 0.5em
  minus 0.4em\relax IEEE, 2014, pp. 6271--6278.

\end{thebibliography}

\end{document}